\journal{Journal of Computer and System Sciences}
\newtheorem{theorem}{Theorem}[section]
\newtheorem{proposition}[theorem]{Proposition}
\newtheorem{lemma}[theorem]{Lemma}
\newtheorem{corollary}[theorem]{Corollary}
\newcommand{\ignore}[1]{}
\newenvironment{proof-sketch}{{\noindent\em Proof sketch.\ }}{\hfill{\Pisymbol{pzd}{113}}\vspace{0.1in}}
\newcommand{\R}{\mathbb{R}}
\newcommand{\x}{\mathbf{x}}
\newcommand{\W}{\mathbf{W}_{\mathrm{total}}}
\newcommand{\G}{\mathcal{G}}
\newcommand{\M}{\mathsf{M}}
\newcommand{\MS}{\mathsf{M}^{\mathrm{uns}}}
\newcommand{\D}{\mathsf{D}}
\newcommand{\NP}{\mathsf{NP}}
\newcommand{\LP}{\mathsf{LP}}
\newcommand{\ILP}{\mathsf{ILP}}
\newcommand{\SDP}{\mathsf{SDP}}
\newcommand{\TMIS}{3-$\mathsf{MIS}$}
\newcommand{\opt}{\mathsf{OPT}}
\newcommand{\apx}{\mathsf{APX}}
\newcommand{\DecProb}[1]{\mathsf{#1}}
\newcommand{\cS}{\mathcal{S}}
\newcommand{\eps}{\varepsilon}
\newcommand{\expect}{\operatorname{\mathbb{E}}}
\renewcommand{\epsilon}{\varepsilon}
\newcommand{\IE}{{\em i.e.}\xspace}
\newcommand{\EG}{{\em e.g.}\xspace}
\newcommand{\EA}{{\em et al.}\xspace}
\newcommand{\dhigh}{{\delta_{\mathrm{h}}}}
\newcommand{\dlow}{{\delta_{\mathrm{\ell}}}}
\newcommand{\dhighsq}{{\delta^2_{\mathrm{h}}}}
\newcommand{\mami}{\mbox{\small\sf max-min}}
\newcommand{\iin}{{{\mathrm{in}}}}
\newcommand{\out}{{{\mathrm{out}}}}
\begin{document}

\begin{frontmatter}

%% Title, authors and addresses

%% use the tnoteref command within \title for footnotes;
%% use the tnotetext command for the associated footnote;
%% use the fnref command within \author or \address for footnotes;
%% use the fntext command for the associated footnote;
%% use the corref command within \author for corresponding author footnotes;
%% use the cortext command for the associated footnote;
%% use the ead command for the email address,
%% and the form \ead[url] for the home page:
%%
%% \title{Title\tnoteref{label1}}
%% \tnotetext[label1]{}
%% \author{Name\corref{cor1}\fnref{label2}}
%% \ead{email address}
%% \ead[url]{home page}
%% \fntext[label2]{}
%% \cortext[cor1]{}
%% \address{Address\fnref{label3}}
%% \fntext[label3]{}

\title{On the Complexity of Newman's Community Finding Approach for Biological and Social Networks\tnoteref{bb1}}

\tnotetext[bb1]{Results in this paper were also presented at the ICALP 2011 workshop on Graph algorithms and Applications, Zurich, Switzerland, July 3, 2011.}

%% use optional labels to link authors explicitly to addresses:
%% \author[label1,label2]{<author name>}
%% \address[label1]{<address>}
%% \address[label2]{<address>}

\author[auth1]{Bhaskar DasGupta\corref{label1}\fnref{label2}}

\address[auth1]{Department of Computer Science, University of Illinois at Chicago, Chicago, IL 60607. Email: {\tt dasgupta@cs.uic.edu}}

\fntext[label2]{Supported by DIMACS special focus on Computational and Mathematical Epidemiology. Research partially done 
while the author was on Sabbatical leave at DIMACS.}

\cortext[label1]{Corresponding author.}

\author[auth2]{Devendra Desai}

\address[auth2]{Department of Computer Science, Rutgers University, Piscataway, NJ 08854. \\ Email: {\tt devdesai@cs.rutgers.edu}}

\begin{abstract}
Given a graph of interactions,
a {\em module} (also called a {\em community} or {\em cluster}) is a subset of nodes
whose {\em fitness} is a function of the {\em statistical significance} of the pairwise interactions
of nodes in the module.
The topic of this paper is a {\em model-based} community finding approach,
commonly referred to as {\em modularity clustering},
that was originally proposed by Newman~\cite{LN08}
and has subsequently been extremely popular in practice (\EG, see~\cite{m1,m2,AK08,NG04,N06}).
Various heuristic methods are currently employed for finding the optimal solution.
However, as observed in~\cite{AK08}, the exact computational complexity of this approach is still largely
unknown.

To this end, we initiate a systematic study of the computational complexity of modularity clustering.
Due to the specific quadratic nature of the modularity function, it is necessary to study its value on {\em sparse}
graphs and {\em dense} graphs {\em separately}.
Our main results include a $(1+\eps)$-inapproximability for dense graphs and a logarithmic approximation
for sparse graphs. We make use of several combinatorial properties of modularity to get these results.
These are the first non-trivial approximability results
beyond the $\NP$-hardness results in~\cite{BDGGHNW07-1}.
\end{abstract}

\begin{keyword}
%% keywords here, in the form: keyword \sep keyword
Community detection \sep Modularity clustering \sep Approximation algorithms \sep Approximation hardness \sep Social networks \sep Biological networks

%% MSC codes here, in the form: \MSC code \sep code
%% or \MSC[2008] code \sep code (2000 is the default)

\MSC 05C85 \sep 68Q17 \sep 68Q25 \sep 68W25 \sep 68W40 \sep 90C27 \sep 90C35 \sep 90C22 \sep 90C35 \sep 91C20
\end{keyword}
\end{frontmatter}

\section{Introduction}

Many systems of interaction in biology and social science are modeled 
as a graph of pairwise interaction of entities~\cite{A05,AJB99}. 
An important problem for these types of graphs is to {\em partition}
the nodes into so-called ``communities'' or ``modules'' of ``statistically significant''
interactions. Such partitions facilitate studying interesting properties of
these graph in their applications, such as studying the behavioral patterns of
an individual in a societal context, and serve as important components 
in computational analysis of these graph.
In this paper we consider the {\em static} model of interaction in which
the network interconnections do not change over time.

Simplistic definitions of modules, such as {\em cliques}, unfortunately do not 
apply well in the context of biological and social networks and therefore alternative
definitions are most often used. In the ``model-based'' community finding 
approach, one first starts with an appropriate ``global null model'' $\G$ of a background random 
graph\footnote{Of course, any clustering measure that relies on a global null model
suffers from the drawback that each node can get attached to any other node of the graph; for 
another possible drawback see~\cite{FB07}. The purpose
of this paper is not to debate on the pros and cons of model-based clustering.}
and then attempts
to place nodes in the same module if their interaction patterns are significantly stronger than that
inferred from the null model. The null model $\G$ may provide, implicitly or explicitly, 
the probability $p_{i,j}$ of an edge between two nodes $v_i$ and $v_j$. 
As an illustration, suppose that our input is an edge-weighted graph with all weights being positive and
normalized between $0$ and $1$. Then, if $p_{i,j}$ differs significantly from $w_{i,j}$, the weight of the edge
between nodes $v_i$ and $v_j$, the edge may be considered to be {\em statistically significant}; thus, if 
$p_{i,j}\ll w_{i,j}$ then it is preferable that $v_i$ and $v_j$ should be placed in the same module whereas 
if $p_{i,j}\gg w_{i,j}$ then it is preferable that $v_i$ and $v_j$ should be placed in different modules.
The standard $\{+,-\}$-correlation clustering that appears in the computer science 
literature extensively~\cite{cor1,cor2,cor3} can be placed in the above model-based clustering framework in the
following manner: 
given the input graph $G$ with each edge labeled as $+$ or $-$, let $H$ be the 
graph consisting of all edges labeled $+$ in $G$, $p_{i,j}=0$ (resp. $p_{i,j}=1$) 
if the edge was labeled $+$ or missing (resp., labeled $-$), the modularity of an edge is 
$a_{i,j}-p_{i,j}$ where $a_{i,j}$ is the $(i,j)^{\rm th}$ entry in the adjacency matrix of $H$ and 
the total modularity is a function of individual modularities of edges as induced by the clustering.

In this paper, we investigate a model-based clustering approach 
originally introduced by Newman and subsequently studied by Newman and others in 
several papers~\cite{LN08,NG04,N06}.
The null model in this approach is dependent on the degree distribution
of the given graph. 
{\em Throughout the paper, by a set of communities (or clusters) we mean 
a partition $\cS$ of the nodes of the graph and, except in Section~\ref{digraph}, 
all graphs are undirected}. 

\subsection{The Basic Setup For Undirected Unweighted Graphs}

The basic setup for undirected unweighted graphs as described below can easily be generalized to the case of edge-weighted
undirected graphs (see Section~\ref{weighted}) and edge-weighted directed graphs (see Section~\ref{digraph}).
Let $G=(V,E)$ denote the given input graph with $n=|V|$ nodes and 
$m=|E|$ edges, let $d_v$ denote the degree of node $v\in V$, and let 
$A=\left[a_{u,v}\right]$ denote the {\em adjacency matrix} of $G$, \IE, $a_{u,v}=1$ if $\{u,v\}\in E$ and $a_{u,v}=0$ otherwise.
The null model $\G$ for modularity clustering is defined by the edge probability
function $p_{u,v}=\frac{d_u d_v}{2m}$ for $u,v\in V$ {\em with ${u=v}$ being allowed};
note that the null model provides a random network such that the {\em expected
degree} of a node $v$ is precisely $d_v$. 
Intuitively, if $a_{u,v}$ differs significantly from $p_{u,v}$ then 
the connection (or, the lack of it) is a significant deviation from the null
model. Based on this intuition, the {\em fitness} of the community formed by 
a subset of nodes $C\subseteq V$ is defined as\footnote{The $\nicefrac{1}{(2m)}$ factor is for {\em normalization purposes only} to 
make the {\em optimal} objective value to lie between $0$ and $1$.} 
\begin{equation}
\M(C)=
\frac{1}{2m} \left(
\sum_{u,v\in C} 
\left(
a_{u,v}-\frac{d_u d_v}{2m}
\right)
\right) 
\label{eq:11}
\end{equation}
Then, a partition $\cS=\big\{C_1,C_2,\ldots,C_k\big\}$ of $V$ has a {\em total modularity} of 
\begin{equation}
\M(\cS)=\sum_{C_i\in\cS}\M(C_i) 
\label{eq:1}
\end{equation}
Notice that each distinct pair of nodes $u$ and $v$ contribute {\em twice} to the inside term 
$a_{u,v}-\frac{d_u d_v}{2m}$ in Equation~\eqref{eq:11}. 
The goal is to find a partition (modular clustering) $\cS$ (with 
unspecified $k$) to {\em maximize} $\M(\cS)$. 
Note that by allowing $u$ and $v$ to be equal in the inside summation,
we provide a {\em negative weight to every node}.

Let $\displaystyle\opt=\max_{\cS} \M(\cS)$
denote the {\em optimal} modularity value.
It is easy to verify that $0\leq\opt<1$.

\subsection{Brief History of Modularity Clustering and Its Applications}

The modularity clustering approach is extremely popular both in the context of
biological networks~\cite{m1,m2} as well as social networks~\cite{AK08,LN08,NG04,N06}.
However, as observed in~\cite{AK08}, not much was known about the computational complexity aspect modularity
clustering beyond $\NP$-completeness for dense graphs, though various
heuristic methods have been proposed and empirically evaluated in 
publications such as~\cite{j1,j2,j3} via methods such as finding minimum weighted cuts.
For unweighted networks, it is known
that $\opt=0$ if $G$ is a clique, 
$\opt=1-\frac{1}{k}$ if $G$ is an union of $k$ {\em disjoint} cliques each with $n/k$ nodes,
computing $\opt$ is $\NP$-complete for sufficiently dense graphs\footnote{The reduction roughly requires 
$d_v=\Omega\left(\sqrt{n}\,\right)$ for every node $v$.} and 
the above-mentioned $\NP$-completeness result holds even if any solution is constrained to contain
no more than two clusters~\cite{BDGGHNW07-1}.

\subsection{Informal Summary of Our Results}

{\em Unless mentioned otherwise explicitly, all algorithmic results apply for edge-weighted graphs
and all hardness results apply for unweighted graphs}. 

\vspace*{0.1in}
\noindent
{\bf Hardness Results}
For {\em dense} graphs, namely for the complements of $3$-regular graphs, Theorem~\ref{maxsnp} in 
Section~\ref{inapprox} provides a $(1+\eps)$-inapproximability of the modularity clustering problem
irrespective of whether the number of clusters is
pre-specified or the algorithm is allowed to select the best number of 
clusters\footnote{The proof shows that $\eps$ is roughly $0.0006$.}.
The required approximation gap in our reduction is derived from the approximation gap of the maximum independent set
problem for $3$-regular graphs in~\cite{CC06}.
The intuition behind our inapproximability result is that, for the type of dense graphs that is considered in our reduction,
large-size cliques must be {\em properly} contained within the clusters.
However, the gap preservation calculations need to be done extremely accurately to avoid shrinking 
the inapproximability gap\footnote{For example, the inapproximability gap of Berman and Karpinski in~\cite{BK99} 
does not suffice for our purposes.}.

Lemma~\ref{ktotwo} in Section~\ref{fewclust} shows, using probabilistic arguments,  
that small number of clusters well-approximate 
the optimal modularity value; in particular, partitioning into {\em just two} clusters 
already achieves at least {\em half} of the optimum. Thus, it behooves to look at the complexity 
of the problem when we have at most two clusters, which we refer to as the 
{\em $2$-clustering problem}.
Theorem~\ref{nph} in Section~\ref{nphard} proves the $\NP$-completeness of the $2$-clustering problem for 
{\em sparse} graphs, namely for $d$-regular graphs with any fixed $d\geq 9$; the previous $\NP$-completeness result for this case
in~\cite{BDGGHNW07-1} required the degree of every node to be 
large (roughly $\Omega\left(\sqrt{n}\,\right)$ ). 
Notice that we cannot anymore use the idea of hiding a large-size clique since the graph does
not have any cliques of size more than $d$ and, for fixed $d$, one can indeed enumerate all these cliques in polynomial time.
Instead, our reduction is from the {\em graph bisection} problem for $4$-regular graphs. Intuitively, now an optimal solution for $2$-clustering 
is constrained to have exactly the same number of nodes in each community to avoid any local improvement.
The ideas in the reduction are motivated by the proof for this case in~\cite{BDGGHNW07-1},
but we have to do a more careful reduction and analysis to preserve both the low-degree and 
the regularity of the resulting graph.

\vspace*{0.1in}
\noindent
{\bf Approximation Algorithms}
We first consider the case of sparse graphs. We show in Section~\ref{lp-relax} that 
a natural linear programming relaxation of modularity clustering has a large
integrality gap, thereby ruling out this avenue for non-trivial 
approximations\footnote{Interestingly, the proof shows that $d$-regular expander graphs have 
small modularity values ($\approx \nicefrac{1}{\sqrt{d}}\,$).}.
Theorem~\ref{log-approx} in Section~\ref{log-regular}
provides a $O(\log d)$-approximation for most (unweighted) $d$-regular graph (\IE, with $d\leq \frac{n}{2\ln n}$),
and an approximation that is logarithmic in the {\em maximum weighted degree}
for weighted graphs provided maximum weighted degree\footnote{As noted in Section~\ref{weighted}, 
we normalize all the weights such that their sum is {\em exactly} twice the number of edges.}
is no more than {\em about} $\sqrt[5]{n}\,$.
It is easy to see that the modularity function is {\em neither monotone nor sub-modular}, 
thus we instead need to use semi-definite programming ($\SDP$) techniques for {\em maximizing quadratic forms}.
However, we face several technical hurdles in using 
$\SDP$-based approximation algorithms for quadratic forms in~\cite{CW04,AMMN05,AN06}: the coefficient 
matrix has {\em negative diagonal entries} and the lower bounds (hence the approximation ratios) 
in~\cite{CW04,AMMN05,AN06} depend on the number of nodes and not on the degree.
Thus, our proof proceeds in two steps. In the first step we obtain a lower bound on the optimal modularity value as a {\em function of the degree}
or {\em the maximum weighted degree} using an {\em explicit} graph
decomposition. In the second step, we show that the $\SDP$-based method for quadratic forms
can be used to obtain an approximation that is within a logarithmic factor of this lower
bound in spite of the negative diagonal entries.

For {\em locally-dense} weighted graphs (\IE, graphs in which every node has a weighted degree of $\Omega(n)\,$)
we observe in Section~\ref{RL} that one can get a solution within any {\em constant additive error} in polynomial time 
by a simple use of the {\em regularity lemma}. In view of our $\apx$-hardness result for dense graphs
described before, this is perhaps the best polynomial-time approximation one could hope for.

\vspace*{0.1in}
\noindent
{\bf Directed weighted Graphs}
In Section~\ref{digraph} we show that all the hardness and approximation results
for undirected weighted graphs can be extended to similar results for {\em directed} weighted graphs.

\vspace*{0.1in}
\noindent
{\bf Alternative Objectives and Null Models}
There are two natural objections to Newman's modularity clustering: 
{\em approximate solutions provably tend to produce many trivial (single-node) clusters} and 
{\em the background null model could be different}\footnote{The idea of using alternative null models has been explored 
before by some researchers~\cite{GGW07,KN09}; 
in particular, Karrer and Newman~\cite{KN09}
showed that the scale-free null model provided by linear preferential attachment 
do not provide a new null model.
However, the focus in all these results was mainly to {\em empirically}
compare null models using simple algorithms based on greedy approaches 
without provable approximation guarantees.}.
Motivated by these observations, we consider two variations of the original
modularity measure, one in which
the modularity of the network is the {\em minimum} (instead of sum) of the modularities of individual clusters
and the other in which 
the null model is the classical Erd\"{o}s-R\'{e}nyi random graph.
Our results show that the minimum objective provides similar optimal
modularity values as the original sum objective without allowing small clusters, and the 
Erd\"{o}s-R\'{e}nyi random graph null model is equivalent to Newman's modularity clustering in an {\em appropriately
defined} regular graph.

\subsection{Comments on Our Results}

\vspace*{0.1in}
\noindent
{\bf Relationships to previous approximation algorithms for quadratic forms}
The special case of partitioning the nodes into {\em two clusters only} can be written down as 
maximizing a quadratic form. However, none of the existing approximability results for quadratic forms apply directly
to our case. In particular, the $O(\log n)$-approximation in~\cite{CW04,AMMN05}
is not applicable since the diagonal entries of the resulting constraint matrix 
are negative\footnote{The negative diagonal entries {\em are crucial} in the modularity measure~\cite{personal,AK08}. Moreover, they
could be small or large depending on the graph, thus it is not possible to specify a priori
bound on them.}, results such as in~\cite{HK04} do not apply
since the constraint matrix is {\em not} necessarily a positive semi-definite matrix and 
the $O(1)$-approximations of~\cite{AN06} via Grothendieck's inequality do not apply since 
the quadratic form does {\em not} induce a bipartition of variables.

\vspace*{0.1in}
\noindent
{\bf Possibility of logarithmic approximation without degree constraints}
Our logarithmic approximations require some bound on the maximum degree of the given graph.
A natural question is of course if such degree bounds can be removed. Two 
observations regarding this are relevant:

\vspace*{0.1in}
\noindent
\SixStar$\,$
A technical difficulty that arises for this purpose is from the fact that the modularity value 
can be precisely $0$ (such as when the given graph is $K_n$, $K_{n,n}$ or a graph 
obtained from $K_n$ by removing $\mathrm{polylog}(n)$ edges) or 
arbitrarily close to $0$ (such as when the given graph is the complement of 
small degree graph). Thus, at the very least, a non-trivial approximation without
such degree bounds would require an efficient 
polynomial-time computable characterization of the topology of graphs whose modularity values can be arbitrarily small
together with a special algorithmic approach to handle these graphs;
approaches using quadratic forms or the regularity lemma do not suffice in this respect.

\vspace*{0.1in}
\noindent
\SixStar$\,$
The negative weights of the nodes start playing a more crucial role in the value of 
modularity when it is close to $0$. 
As observed by other researchers before, negative diagonal entries in the coefficient 
matrix of the objective that shifts the objective value close to $0$
are sometimes difficult for approximate.

\vspace*{0.1in}
\noindent
{\bf Relationships to other clustering or partitioning methods}
Modularity clustering can be defined by several equivalent equations, which may {\em seem to suggest}
at a first glance that combinatorially the problem may be either similar to
(via Equations~\eqref{eq:11}~and~\eqref{eq:1}$\,$)
some form of {\em correlation clustering}, or (via Equation~\eqref{eq:3}$\,$) similar to 
{\em graph bisection} (for two clusters), or similar to {\em minimum $\ell$-way cut}/{\em clique-partition} 
type of problem (for arbitrary number of clusters, depending on whether the graph is unweighted or weighted), or 
similar to (via Lemma~\ref{subgraph-selection}) some type of {\em dense subgraph} problem. 
However, our results show both similarities and differences between modularity clustering and these problems.
For example, our 
hardness result for dense graphs should be contrasted with other partitioning problems of similar
nature, such as MAX-CUT, graph bisection, graph separation, minimum $\ell$-way cut
and some versions of correlation clustering,
for which one can design a PTAS (\EG, see~\cite{AKK99,cor1,FK96}). 

\section{Basic Results on Partitioning into Fewer Clusters}
\label{fewclust}

In this section we show bounds on $\opt$ as well as some useful properties of 
the solution if we restrict the number of clusters to some pre-specified value $k$; we will refer 
to this as the {\em $k$-clustering} problem. The objective function $\M(\cS)$ can be equivalently represented (via
algebraic manipulation as observed in~\cite{BDGGHNW07-1,LN08,NG04,N06}) as follows.
Let $m_i$ denote the number of edges whose both endpoints are in the cluster $C_i$, 
$m_{ij}$ denote the number of edges one of whose endpoints is in $C_i$ and the other in $C_j$ 
and $D_i = \sum_{v\in C_i} d_v$ denote the sum of degrees of nodes in cluster $C_i$. Then, 
\begin{equation}
\M(\cS)=
\sum_{C_i\in\cS}\, 
\left(
\frac{m_i}{m} - \left( \frac{D_i}{2m} \right)^2 
\right) 
\label{eq:2}
\end{equation}
Since 
$\sum_{v\in V} \left(a_{u,v}-\frac{d_u d_v}{2m}\right)=0$ for any $u\in V$,
we can alternatively express $\M(C)$ as 
\begin{equation}
\M(C)=
\frac{1}{2m} \left(
\sum_{u\in\, C,\,v\not\in\, C} 
\left(
\frac{d_u d_v}{2m}-a_{u,v}
\right)
\right) 
\label{eq:hh}
\end{equation}
This, along with Equation~\eqref{eq:2}, this gives us the following third equation of modularity
(note that now each pair of clusters contributes to the sum in Equation~\eqref{eq:3} {\em exactly once}):
\begin{equation}
\M(\cS)= \sum_{C_i,\,C_j\,\colon\,i\,<j}{\left(\frac{D_i D_j}{2m^2} - \frac{m_{ij}}{m}\right)}
\label{eq:3}
\end{equation}
Let $\opt_k$ denote the modularity value of an optimal clustering when one is allowed 
at most $k$ clusters. 

The following two lemmas make use of the alternative formulations described above.
The first lemma asserts, via a probabilistic argument,
that the optimal value does not go down by too much in our restricted setting. 

\begin{lemma}\label{ktotwo}
For any $k\geq 1$, $\left(1-\frac{1}{k}\right)\,\opt\leq\opt_k\leq 1-\frac{1}{k}$.
\end{lemma}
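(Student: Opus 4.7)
\smallskip
\noindent
\textbf{Plan.} The two inequalities will be handled separately, each by exploiting a different one of the three equivalent formulations of modularity given in Equations~\eqref{eq:11},~\eqref{eq:2},~\eqref{eq:3}.

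\smallskip
\noindent
\emph{Upper bound $\opt_k \le 1-\tfrac{1}{k}$.} I would use Equation~\eqref{eq:2}. For any partition $\cS=\{C_1,\dots,C_\ell\}$ with $\ell\le k$, set $x_i = D_i/(2m)$, so that $\sum_i x_i = 1$. Then $\sum_i (m_i/m) \le 1$ since edges inside clusters cannot exceed $m$, and by Cauchy--Schwarz (or the power-mean inequality) applied to at most $k$ nonnegative terms summing to $1$,
\[
\sum_{i=1}^{\ell} x_i^2 \;\ge\; \frac{1}{\ell} \;\ge\; \frac{1}{k}.
\]
Substituting into Equation~\eqref{eq:2} yields $\M(\cS)\le 1-\tfrac{1}{k}$, and taking the maximum over all such $\cS$ gives the claimed upper bound.

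\smallskip
\noindent
\emph{Lower bound $\opt_k \ge (1-\tfrac{1}{k})\opt$.} This is the more interesting direction, and I would argue probabilistically. Fix an optimal clustering $\cS^{*}=\{C_1^{*},\dots,C_r^{*}\}$ achieving $\M(\cS^{*})=\opt$. Independently and uniformly at random assign each cluster $C_i^{*}$ to one of $k$ groups, and let $\cS'$ be the coarser partition obtained by merging the clusters in each group; clearly $\cS'$ has at most $k$ clusters. Writing $b_{u,v}=a_{u,v}-\tfrac{d_u d_v}{2m}$ and using Equation~\eqref{eq:11},
\[
\M(\cS') \;=\; \frac{1}{2m}\sum_{\{u,v\}\text{ together in }\cS'} b_{u,v}.
\]
A pair already together in $\cS^{*}$ remains together; a pair split by $\cS^{*}$ becomes together in $\cS'$ with probability exactly $1/k$. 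Hence, letting $T=\sum_{u,v\text{ together in }\cS^{*}} b_{u,v}$ and $S=\sum_{u,v\text{ separated by }\cS^{*}} b_{u,v}$,
\[
\expect[\M(\cS')] \;=\; \frac{1}{2m}\left(T + \frac{1}{k}\,S\right).
\]
The key identity is that $\sum_{v\in V} b_{u,v}=0$ for every $u$ (this is precisely what lets Equation~\eqref{eq:hh} hold), so summing over all ordered pairs gives $T+S=0$, i.e., $S=-T=-2m\cdot\opt$. Therefore
\[
\expect[\M(\cS')] \;=\; \opt - \frac{1}{k}\,\opt \;=\; \left(1-\frac{1}{k}\right)\opt,
\]
and some realization $\cS'$ achieves at least the expectation, giving $\opt_k \ge (1-\tfrac{1}{k})\opt$.

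\smallskip
\noindent
\emph{Expected obstacle.} The calculations themselves are routine; the only subtle step is choosing the ``right'' representation of $\M$ at each stage, namely Equation~\eqref{eq:2} for the upper bound (so that the degree-squared term is naturally bounded below by convexity) and Equation~\eqref{eq:11} together with the zero-row-sum identity $\sum_v b_{u,v}=0$ for the lower bound (so that averaging over a random $k$-coloring of clusters collapses cleanly). Getting the right probability of ``being reunited'' ($1/k$ rather than, say, $1/k^2$) and carefully accounting for the ordered-vs-unordered pair convention of Equation~\eqref{eq:11} are the only places where one could slip up.
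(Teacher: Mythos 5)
Your proof is correct and takes essentially the same approach as the paper: the upper bound via Equation~\eqref{eq:2} and Cauchy--Schwarz, and the lower bound by uniformly at random assigning the clusters of an optimal solution to $k$ superclusters and taking expectations. The only cosmetic difference is that you compute the expectation from Equation~\eqref{eq:11} together with the zero-row-sum identity $\sum_v b_{u,v}=0$, whereas the paper applies the same random merging to the inter-cluster formulation~\eqref{eq:3}, which already packages that identity.
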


\begin{proof}
The inequality $\opt_k\leq 1-\frac{1}{k}$ can be proved as follows.
For any clustering $\cS$ with at most $k$ clusters, Equation~\eqref{eq:2} gives
$\M(\cS) = \sum_{i=1}^k \frac{m_i}{m} - \sum_{i=1}^k \left(\frac{D_i}{2m}\right)^2$.
The first sum in this equation is upper-bounded by $1$. Using Cauchy-Schwarz inequality, we get $k\sum_{i=1}^k D_i^2\geq\left(\sum_{i=1}^k D_i\right)^2$, 
giving a lower-bound of $\nicefrac{1}{k}$ for the second sum. 

The inequality $\left(1-\frac{1}{k}\right)\opt\leq\opt_k$ can be proved as follows.
For $k=1$, the statement is trivially true. Now consider $k>1$. We will make use of Equation~\eqref{eq:3} for modularity values. 
Suppose that our optimal clustering $\cS$ has more than $k$ clusters. Denote each term in the summation of 
Equation~\eqref{eq:3} by $\M_{ij}$, \IE, $\M_{ij}=\frac{D_i D_j}{2m^2} - \frac{m_{ij}}{m}$; thus
$\opt = \M(\cS)= \sum_{i<j}{\M_{ij}}$.
We can randomly assign each of the clusters to one of $k$ superclusters. Let $I_{ij}$ be the indicator random variable of the event $C_i$ and $C_j$ are in different clusters and 
let $\cS_k$ denote the random $k$-clustering. It is easy to see that any pair $C_i$ and $C_j$ will contribute $\M_{ij}$ to the final 
clustering if and only if they are not in the same supercluster. Therefore, $\M(\cS_k) = \sum_{i<j}I_{ij}\M_{ij}$. Thus we get
$
\opt_k\geq\expect[\M(\cS_k)]=\sum_{i<j}{\expect[I_{ij}]\M_{ij}}=\sum_{i<j}{\left(1-\frac{1}{k}\right)\M_{ij}} = \left(1-\frac{1}{k}\right)\opt$. 
\end{proof}

The next lemma shows that the $2$-clustering problem can also be alternatively viewed as 
a special kind of ``subgraph selection'' problem. 

\begin{lemma}\label{subgraph-selection}
Let $V_1$ and $V_2$ be any partition of $V$. Then, $\M(V_1)=\M(V_2)$.
\end{lemma}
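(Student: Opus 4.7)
The plan is to use Equation~\eqref{eq:hh}, the ``external'' reformulation of $\M(C)$ as a sum over ordered pairs $(u,v)$ with $u\in C$ and $v\notin C$. The key observation is that when $\{V_1,V_2\}$ is a bipartition of $V$, the complement of $V_1$ inside $V$ is exactly $V_2$, and the complement of $V_2$ is exactly $V_1$. So Equation~\eqref{eq:hh} applied to $C=V_1$ and to $C=V_2$ produces two double sums, each ranging over ordered pairs with one endpoint in $V_1$ and the other in $V_2$, with the only difference being which of the two sets the ``outer'' index $u$ lives in.

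The summand $\frac{d_ud_v}{2m}-a_{u,v}$ is symmetric under the exchange $u\leftrightarrow v$, since the graph is undirected (so $a_{u,v}=a_{v,u}$) and products of degrees commute. Hence renaming the dummy indices in the expression for $\M(V_2)$ recovers exactly the expression for $\M(V_1)$, and the two values coincide.

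There is really no obstacle here; the content is all in Equation~\eqref{eq:hh}, which in turn follows from the row-sum identity $\sum_{v\in V}\bigl(a_{u,v}-\frac{d_ud_v}{2m}\bigr)=0$ already highlighted in the excerpt. If one prefers to bypass invoking \eqref{eq:hh} as a black box, the same argument can be made directly: for each $u\in V_1$ the row-sum identity gives $\sum_{v\in V_1}\bigl(a_{u,v}-\frac{d_ud_v}{2m}\bigr)=-\sum_{v\in V_2}\bigl(a_{u,v}-\frac{d_ud_v}{2m}\bigr)$, so $2m\cdot\M(V_1)=\sum_{u\in V_1,\,v\in V_2}\bigl(\frac{d_ud_v}{2m}-a_{u,v}\bigr)$; the symmetric manipulation starting from $\M(V_2)$ yields the identical cross sum.
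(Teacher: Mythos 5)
Your argument is correct and is essentially the paper's own proof: the paper likewise uses the row-sum identity $\sum_{v\in V}\bigl(a_{u,v}-\frac{d_u d_v}{2m}\bigr)=0$ (the source of Equation~\eqref{eq:hh}) to reduce both $\M(V_1)$ and $\M(V_2)$ to the same symmetric cross-sum over pairs with one endpoint in each part. Your closing ``bypass'' paragraph is word-for-word the paper's argument, so there is nothing to add.
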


\begin{proof}
Remember that, for any node $u$, 
$\sum_{v\in V}\left(a_{u,v}-\frac{d_u d_v}{2m}\right)=0$.
Thus, 
\begin{gather*}
0=\sum_{u\in V_1}\sum_{v\in V}\left(a_{u,v}-\frac{d_u d_v}{2m}\right)=\M(V_1)+\sum_{u\in V_1}\sum_{v\in V_2}\left(a_{u,v}-\frac{d_u d_v}{2m}\right)
\\
0=\sum_{u\in V_2}\sum_{v\in V}\left(a_{u,v}-\frac{d_u d_v}{2m}\right)=\M(V_2)+\sum_{u\in V_2}\sum_{v\in V_1}\left(a_{u,v}-\frac{d_u d_v}{2m}\right)
\end{gather*}
and therefore $\M(V_1)=\M(V_2)$.
\end{proof}

\section{Results for Dense Graphs}

\subsection{$\apx$-hardness}
\label{inapprox}

This hardness result may be contrasted with the results in Section~\ref{RL}
where we show that the modularity value can be approximated to within any {\em constant
additive error} for dense graphs using the regularity lemma.
However, the $\apx$-hard instances here have modularity values that
are very close to $0$ (around $\nicefrac{1}{n}$), thus the constant additive
error provides no guarantee on the approximation ratio.

\begin{theorem}\label{maxsnp}
It is $\NP$-hard to approximate the $k$-clustering problem, for any $k$,  
on $(n-4)$-regular graphs within a factor of $1+\eps$ for some constant $\eps>0$.
\end{theorem}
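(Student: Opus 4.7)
The plan is to reduce from the maximum independent set (MAX-IS) problem on $3$-regular graphs, exploiting the gap-preserving $\NP$-hardness of Chleb\'{\i}k--Chleb\'{\i}kov\'{a}~\cite{CC06}, which furnishes constants $A>B>\tfrac14$ such that distinguishing $\alpha(H)\ge An$ from $\alpha(H)\le Bn$ is $\NP$-hard on $3$-regular graphs. Given such an $H$, I would output $G=\overline{H}$, which is automatically $(n-4)$-regular as required.

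The first step is to rewrite $\M$ in terms of the structure of $H$. For a cluster $C$ of size $c$ with $e_H$ internal $H$-edges, the number of internal $G$-edges is $\binom{c}{2}-e_H$, so using Equation~\eqref{eq:2} with $m=n(n-4)/2$ and summing over clusters yields the working identity
$$\M(\cS)\;=\;\frac{-4\bigl(n^2-\sum_i c_i^2\bigr)+2n\,E_H^{\mathrm{cut}}}{n^2(n-4)},$$
where $E_H^{\mathrm{cut}}$ counts $H$-edges with endpoints in different clusters. For the YES case I would take the two-clustering $(I,V\setminus I)$ where $I$ is a maximum IS of $H$, so $|I|=\alpha(H)\ge An$: since $I$ is independent in $H$, all $3|I|$ edges of $H$ incident to $I$ are cut, and substitution gives
$$\M(\cS)\;=\;\frac{2|I|(4|I|-n)}{n^2(n-4)}\;\ge\;\frac{2A(4A-1)}{n-4}.$$
Because a $2$-clustering is also a valid $k$-clustering for every $k\ge 2$, this lower bound serves all the relevant $k$ (for $k=1$, $\opt_1=0$ is uninteresting).

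The technical heart of the argument is the NO-case upper bound on $\opt(G)$, which then dominates $\opt_k(G)$ for every $k$. I would combine two per-cluster constraints: the degree bound $e_H(C)\le 3c/2$ inherited from $3$-regularity of $H$, and the Tur\'an/Caro--Wei inequality $\alpha(H[C])\ge c^2/(2e_H(C)+c)$. Together with the NO hypothesis $\alpha(H[C])\le\alpha(H)\le Bn$, the latter forces $e_H(C)\ge c(c-\alpha(H))/(2\alpha(H))$ whenever $c>\alpha(H)$. Substituting these tradeoffs into the working identity and using convexity of the per-cluster contribution in $c$, the worst-case configuration collapses to ``one possibly-large cluster plus several IS-clusters each of size at most $\alpha(H)$''; analysing this reduced configuration yields an upper bound $\opt(G)\le U/(n-4)$ whose value depends only on $B$. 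For the CC06 constants, one then verifies $2A(4A-1)>U$, so the multiplicative YES/NO gap is at least $1+\eps$ for an explicit $\eps>0$ (roughly $0.0006$).

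The main obstacle will be making this NO-case upper bound sharp enough to preserve the CC06 gap. Because the modularity values of these instances are on the order of $1/n$, any constant additive slack in $U$ would wipe out the multiplicative gap---this is exactly the ``extremely accurate'' gap preservation warned about in the introduction, and the reason the slightly weaker Berman--Karpinski gap~\cite{BK99} does not suffice here. Concretely, the Caro--Wei lower bound on internal $H$-edges has to be matched precisely against the gain in $\sum_i c_i^2$ that a single large cluster can provide, and one must verify that the extremal configuration for the upper bound really is (or is arbitrarily close to) the ``max IS vs.\ complement'' two-clustering used in the YES case.
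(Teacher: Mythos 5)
Your reduction, YES-case clustering, and working identity all match the paper's (the paper also reduces from the Chleb\'{\i}k--Chleb\'{\i}kov\'{a} gap for \TMIS\ and uses the complement graph with the max-IS/complement $2$-clustering for completeness), but your soundness plan has a genuine quantitative gap: the Caro--Wei bound is too weak to preserve the CC06 gap. Caro--Wei gives $e_H(C)\ge \frac{c(c-\alpha)}{2\alpha}$ where $\alpha=\alpha(H[C])$, whereas the bound one actually needs is $e_H(C)\ge c-\alpha$; the former beats the latter only when $c\ge 2\alpha$, and in the critical regime here ($c\approx n/2$, $\alpha\approx\dlow n\approx 0.4845\,n$) it is weaker by essentially a factor of $2$. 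Concretely, plug the balanced $2$-clustering into your working identity in a NO instance: Caro--Wei only forces $e_H(C_i)\ge \frac{(n/2)(n/2-\dlow n)}{2\dlow n}=\frac{3n}{376}$ per side, so $E_H^{\mathrm{cut}}\le \frac{3n}{2}-\frac{3n}{188}$ and the resulting upper bound is $\M(\cS)\le \frac{-2+2\left(\frac32-\frac{3}{188}\right)}{n-4}\approx\frac{0.968}{n-4}$, which already exceeds the YES-case value $\frac{2\dhigh(4\dhigh-1)}{n-4}\approx\frac{0.939}{n-4}$. So the constant $U$ your route produces is too large and the multiplicative gap vanishes --- exactly the failure mode you flagged but did not resolve.

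The paper closes this with a sharper and more elementary counting argument (its Lemma~\ref{exact-count}): if $K$ is a largest clique of $G[C]$ (equivalently a largest independent set of $H[C]$, of size $\alpha\le\dlow n$), then every vertex of $C\setminus K$ misses at least one $G$-edge into $K$, and these missing edges are distinct, so $e_H(C)\ge |C|-\alpha$. This is roughly twice your bound in the relevant range and yields, per cluster, $\M(C)\le\frac{4t^2+2\alpha-3t}{n-4}$ and hence the NO bound $\frac{4\dlow-1}{n-4}\approx\frac{0.938}{n-4}$, which does sit below the YES value. Separately, your reduction of the multi-cluster case to ``one large cluster plus IS-clusters via convexity'' is only asserted; the paper needs a nontrivial case analysis (clusters of size above $\nicefrac{n}{4}$, at most two ``giant'' components, and a complementation lemma for clusters larger than $\nicefrac{n}{2}$) to make the many-cluster soundness bound match the two-cluster one, so that part of your plan would also need to be carried out in full. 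The primary fix, though, is replacing Caro--Wei by the clique-extension counting bound.
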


\begin{proof}
We reduce the maximum-cardinality independent set problem for $3$-regular
graphs (\TMIS) to our problem.
An instance of \TMIS\ consists of a $3$-regular graph $H=(V,E)$,  
and the goal is to find a {\em maximum cardinality} subset of nodes 
$V'\subset V$ such that every pair of nodes $u$ and $v$ in $V'$ is {\em independent}, \IE, $\{u,v\}\not\in E$. 
For notational convenience, let $\dlow=\nicefrac{94}{194}$ and $\dhigh=\nicefrac{95}{194}$. The following inapproximability result is known for \TMIS.

\begin{theorem}{\rm~\hspace*{-0.1in}\cite{CC06}}\label{bk}
For any language $L$ in $\NP$,
there exists a polynomial-time reduction
such that given an instance
$I$ of $L$ produces an instance of $H$ of {\rm \TMIS} with $n$ nodes such that:
\begin{itemize}
\item
if $I\in L$ then $H$ has a maximum independent set of cardinality at least $\dhigh n$; 

\item
if $I\not\in L$ then every maximum independent set of $H$ is of cardinality at most $\dlow\,n$. 
\end{itemize}
\end{theorem}

\noindent
We start with an instance $I$ of $L$ and translate it to an instance $H$ 
of \TMIS\ as described in Theorem~\ref{bk}; we refer to such an instance of \TMIS\
as a ``hard'' instance. Given a hard instance $H=(V,F)$ of \TMIS\ with $|V|=n$ nodes and $|F|=\frac{3\,n}{2}$ edges such that 
a maximum independent set is of size either at most $\dlow\, n$ or at least $\dhigh n$, 
consider the complement $\overline{\rule[8pt]{0pt}{2pt}H}=(V,\overline{\rule[8pt]{0pt}{2pt}F})$ of $H$, \IE,
the graph with $\overline{\rule[8pt]{0pt}{2pt}F}=\{\,\{u,v\}\,|\,u,v\in V,\,u\neq v\}\setminus F$.
Since $H$ is $3$-regular, $\overline{\rule[8pt]{0pt}{2pt}H}$ is $(n-4)$-regular.
The input to our $2$-clustering problem is this graph $\overline{\rule[8pt]{0pt}{2pt}H}$.
For notational uniformity, we will denote the graph $\overline{\rule[8pt]{0pt}{2pt}H}$
by $G=(V,E)$ with $E=\overline{\rule[8pt]{0pt}{2pt}F}$.
Note that $V'\subset V$ is an independent set of $H$ if and only if 
$V'$ is a clique in $G$.
Let $\mathbf{\Psi}$ and $\opt$ denote the size of a maximum independent set of $H$ and 
the optimal modularity value of $G$, respectively.
We prove our claim by showing the following:
\begin{description}
\item[\hspace*{0.5in}(completeness)]
If $\mathbf{\Psi}\geq\dhigh n$
then 
$\opt\geq\dfrac{2(4\dhighsq-\dhigh)}{(n-4)}>\dfrac{0.9388}{n-4}$.

\item[\hspace*{0.75in}(soundness)]
If $\mathbf{\Psi}\leq\dlow\, n$
then 
$\opt\leq \dfrac{4\dlow-1}{n-4}<\dfrac{0.9382}{n-4}$.
\end{description}
For any subset $\emptyset\subset V'\subset V$ of nodes in $G$, 
let $m_{V'}$ be the number of edges in $G$ with both end-points in $V'$ and $D_{V'}$ be the sum of degrees 
of nodes in $V'$ in the graph $G$, \IE, $D_{V'}=\sum_{v\in V'}d_{v}$.

\subsubsection{Proof of Completeness {\rm (}$\mathbf{\Psi}\geq\dhigh n${\rm )}}

\begin{lemma}\label{easydir}
If $\mathbf{\Psi}\geq\dhigh n$
then 
$\opt\geq\dfrac{2(4\dhighsq-\dhigh)}{(n-4)}$.
\end{lemma}

\begin{proof}
Suppose $H$ has a has an independent set $V'$ with $|V'|=t\,n$ for some $t\geq\dhigh$. 
Since $V'$ is a clique of $G$, it follows that $2m_{V'}=tn(tn-1)$ and $D_{V'}=tn(n-4)$. Consider the solution $\cS=\big\{V',V\setminus V'\big\}$
of $2$-clustering on $G$. Using Lemma~\ref{subgraph-selection} and Equation~\eqref{eq:2} we get
\begin{gather*}
\M(\cS)=2\,\M(V')
=2\left(\frac{m_{V'}}{m}-\left(\frac{D_{V'}}{2m}\right)^2\right)
\\
=\frac{2\,t\,n\,(t\,n-1)}{n(n-4)}-2\,t^2
=\frac{2(4\,t^2-t)}{n-4}
\geq\frac{2(4\dhigh^2-\dhigh)}{n-4}
\qedhere
\end{gather*}
\end{proof}

\subsubsection{Proof of Soundness {\rm (}$\mathbf{\Psi}\leq\dlow\,n${\em )}}

\noindent
{\bf Case I: when an optimal solution has exactly $2$ clusters}.

\vspace*{0.1in}
\noindent
Suppose that the optimal solution is 
$\cS=\big\{V',V\setminus V'\big\}$ of $2$-clustering on $G$ with $|V'|=t\,n$ and $0<t\leq \nicefrac{1}{2}$.

\begin{lemma}\label{exact-count}
Let $\alpha\,n$ be the size (number of nodes) of a largest size clique in 
the node-induced subgraph $G'=(V',E')$ where $E'=(V' \times V')\cap E$.  
Then, $\M(V')\leq\dfrac{4t^2+2\alpha-3t}{n-4}$.
\end{lemma}

\begin{proof}
Since the size of the largest clique in $G'$ is $\alpha\, n$, for each of the remaining $(t-\alpha)\,n$ nodes, 
they will not be connected to at least one node inside the clique. Hence, using Equation~\eqref{eq:2}, we get
\[
\M(V') = \frac{m_{V'}}{m}-\left(\frac{D_{V'}}{2m}\right)^2
\leq \frac{\frac{t\,n(t\,n-1)}{2}-(t-\alpha)\,n}{\frac{n(n-4)}{2}}-t^2
=\frac{4\,t^2+2\,\alpha-3\,t}{n-4}
\qedhere
\]
\end{proof}

\begin{lemma}\label{less-than-half}
$\M(V')\leq\dfrac{2\dlow-\frac{1}{2}}{n-4}$.
\end{lemma}

\begin{proof}
Using the previous lemma and the facts that $\alpha\leq\min\big\{t,\dlow\big\}$ and $t\leq \nicefrac{1}{2}$,
we have two cases:

\vspace*{0.1in}
\noindent
{\bf Case 1:} $\pmb{t>\dlow}$. 
Then $\M(V')\leq \frac{4t^2+2\alpha-3t}{n-4}$. The function $f(t)=4t^2-3t$ is increasing in the range $\left(\left.\dlow,\nicefrac{1}{2}\right]\right.$ 
since $\dlow>3/8$ and $\frac{\partial f}{\partial\, t}=8t-3>0$ if $t>3/8$.
Thus, $\max_{\dlow<t\leq \nicefrac{1}{2}} f(t)=f\left(\nicefrac{1}{2}\right)=-\nicefrac{1}{2}$, 
and thus $\M(V')\leq\frac{2\alpha-\frac{1}{2}}{n-4}\leq\frac{2\dlow-\frac{1}{2}}{n-4}$.

\vspace*{0.1in}
\noindent
{\bf Case 2:} $\pmb{t\leq\dlow}$.
Since $\alpha\leq t$ and $4t^2+2\alpha-3t$ is an increasing function of $\alpha$, we have 
$\M(V')\leq \frac{4t^2+2t-3t}{n-4}=\frac{4t^2-t}{n-4}$.
The function $f(t)=4t^2-t$ satisfies $f(0)=0$ and 
\[
\frac{\partial f}{\partial t}=8t-1
\left\{
\begin{array}{ll}
<0 & \mbox{if $t<1/8$} \\
>0 & \mbox{if $1/8<t\leq\dlow$} \\
\end{array}
\right.
\]
Thus, $\max_{0<t\leq\dlow}f(t)=f(\dlow)$ and we have 
$
\M(V')
\leq\dfrac{4\dlow^2-\dlow}{n-4}
\leq\dfrac{2\dlow-\frac{1}{2}}{n-4}
$.
\end{proof}

\noindent
Finally, using Lemma~\ref{subgraph-selection},
$\M(\cS) = 2\,\M(V') \leq \dfrac{4\dlow - 1}{n-4}$,$\,$ completing the soundness proof for this case.

\vspace*{0.1in}
\noindent
{\bf Case II: when an optimal solution has more than $2$ clusters}.

\vspace*{0.1in}
\noindent
For convenience of calculations, we would like to drop the $\frac{1}{2m}$ scaling term from Equation~\eqref{eq:11}.
To this end, we define $\MS(C)=n(n-4)\,\M(C)$.
Let $\cS=\big\{V_1,V_2,\dots,V_{m+1}\big\}$ be an optimal solution of modularity 
clustering that uses a {\em minimum} $m>1$ number of clusters.
Let $|V_i|=t_i\,n$, and suppose that $\emptyset\subset V_i'\subseteq V_i$ is
a largest clique of size $\alpha_i\,n$ in the graph $(V_i,\,(V_i \times V_i)\cap E)$.
Note that $0<\alpha_i\leq\min\big\{\,t_i,\dlow\big\}$ for all $1\leq i\leq m+1$, 
$\sum_{i=1}^{m+1}t_i=1$ and we need to show that $\MS(\cS)\leq\big(4\dlow-1\big)\,n$.
Let $\widehat{V_i}$ denote $V\setminus V_i$.

\begin{lemma}\label{small-component}
$\MS(V_i)\leq (4t_i^2-t_i)  n$.
\end{lemma}

\begin{proof}
$\MS(V_i)$ is maximized when the nodes in $V_i$ form a clique.
Thus,
\[
\MS(V_i)
\leq
\left(\frac{4}{n}-1\right)  (t_i n) + \left(\frac{4}{n}\right)\left(t_in-1\right)(t_i n)
=\left(4t_i^2-t_i\right)  n
\qedhere
\]
\end{proof}

\begin{corollary}\label{sm-cor}
If $|V_i|\leq \nicefrac{n}{4}$ then $\MS(V_i)\leq 0$. If $|V_i|=\left(\frac{1}{4}+\delta\right)n>\nicefrac{n}{4}$ 
then $\MS(V_i)\leq\left(4\delta^2+\delta\right)n$.
\end{corollary}

\begin{lemma}\label{complement}
Suppose that $t_i=\frac{1}{2}+\delta>\frac{1}{2}$ for some $0<\delta<\nicefrac{1}{2}$
and $\widehat{\alpha_i}$ is the size of a largest clique
in $(\widehat{V_i},(\widehat{V_i} \times \widehat{V_i})\cap E)$.
Then,
\[
\MS(V_i)\leq\left(4\delta^2-\delta-\frac{1}{2}+2\widehat{\alpha_i}\right)\,n\leq\left(2\dlow - \frac{1}{2}\right)\,n
\]
\end{lemma}

\begin{proof}
Note that $\left|\widehat{V_i}\right|=\frac{1}{2}-\delta<\nicefrac{1}{2}$. 
Then by Lemma~\ref{subgraph-selection},
\begin{gather*}
\MS(V_i)=\MS(\widehat{V_i})
\leq\left(4 \left(\frac{1}{2}-\delta\right)^2+2\widehat{\alpha_i}-3\left(\frac{1}{2}-\delta\right)\right)  n 
\\
=
\left(4\delta^2-\delta-\frac{1}{2}+2\widehat{\alpha_i}\right)  n
\end{gather*}
where the inequality follows from Lemma~\ref{exact-count} if we replace $V_i$ by $\widehat{V_i}$. Since $t_i\geq\nicefrac{1}{2}$, we 
have 
\[
4\delta^2-\delta-\frac{1}{2}+2\widehat{\alpha_i}=4t_i^2-5t_i+1-2\widehat{\alpha_i}\leq 4t_i^2+2  \widehat{\alpha_i}-3  t_i
\]
Since $\widehat{\alpha_i}\leq\dlow<t_i$, the arguments in Lemma~\ref{less-than-half} can be directly applied on 
$4t_i^2+2  \widehat{\alpha_i}-3  t_i$
to show that  
$\left(4\delta^2-\delta-\frac{1}{2}+2\widehat{\alpha_i}\right) n \leq \left(2\dlow - \frac{1}{2} \right)n$.
\end{proof}

Let us call a cluster $V_i$ a {\em giant component} if $t_i>\dlow$. Note that 
since $3\,\dlow>1$, we can have {\em at most two} giant components.
We have therefore three cases depending on the number of giant components.

\vspace*{0.1in}
\noindent
{\bf Case (i): $\cS$ has no giant components} 
Note that $\cS$ can have at most three clusters containing {\em strictly} more than $\nicefrac{n}{4}$ nodes.

If $\cS$ contains no such cluster then by Corollary~\ref{sm-cor} $\MS(\cS)\leq 0$.

If $\cS$ contains exactly one such cluster, say $V_1$, then 
$\MS(\cS)\leq\MS(V_1)\leq \left(2\dlow-\frac{1}{2}\right)\,n<(4\dlow-1)\,n$ by Lemma~\ref{less-than-half} (if $t_i\leq\nicefrac{1}{2}$) or Lemma~\ref{complement} (if $t_i>\nicefrac{1}{2}$).

If $\cS$ contains exactly two such clusters, say $V_1$ and $V_2$, then again
$\MS(\cS)\leq\MS(V_1)+\MS(V_2)\leq 2\left(2\dlow-\frac{1}{2}\right)\,n=\big(4\dlow-1\big)\,n\,$ by Lemma~\ref{less-than-half} and Lemma~\ref{complement}.

Otherwise, suppose that $\cS$ contains {\em exactly three} such clusters, say $V_1$, $V_2$ and $V_3$.
Let $t_i=\frac{1}{4}+\delta_i$ for $i=1,2,3$.
Then, $0<\delta_1+\delta_2+\delta_3<\nicefrac{1}{4}$. Using Corollary~\ref{sm-cor} we have: 
\begin{gather*}
\sum_{i=1}^3\MS(V_i)
\leq 
\left(4 \sum_{i=1}^3\delta_i^2+\sum_{i=1}^3\delta_i\right)n
<
\left(4  \left(\sum_{i=1}^3\delta_i\right)^2+\frac{1}{4}\right)n
\\
<
\left(4  \left(\frac{1}{4}\right)^2+\frac{1}{4}\right)n
=
\frac{n}{2}
<
(4\dlow-1)\,n
\end{gather*}
{\bf Case (ii): $\cS$ has one giant component} 
Let $V_1$ be the giant component. Since $1-t_1<1-\dlow<\nicefrac{3}{4}$,
there are at most two other clusters with strictly more than $\nicefrac{n}{4}$ nodes.

\vspace*{0.1in}
\noindent
{\bf Subcase (ii-a): there is one other cluster with strictly more than $\pmb{\nicefrac{n}{4}}$ nodes}
Let this cluster be $V_2$. By Corollary~\ref{sm-cor}, $\sum_{j=3}^{m+1}\MS(V_j)\leq 0$.
Note that $t_2\leq\dlow$. Now, by reusing the calculations of Lemma~\ref{less-than-half}
and using Lemma~\ref{complement} we get
\begin{gather*}
\MS(\cS)=\MS(V_1)+\MS(V_2)+\sum_{j=3}^{m+1}\MS(V_j)
\leq \MS(V_1)+\MS(V_2)
\\
\leq 
\!\!\!\!\!\!\!\!\!\!\!\!\!\!\!\!
\underbrace{\left(2\dlow-\frac{1}{2}\right)\,n}_{\substack{\mbox{by Lemma~\ref{complement} if $t_1>\nicefrac{1}{2}$}\\ \mbox{by Lemma~\ref{less-than-half} if $t_1\leq \nicefrac{1}{2}$} }} 
+ 
\underbrace{\left(2\dlow-\frac{1}{2}\right)\,n}_{\mbox{by Lemma~\ref{less-than-half} since $t_2\leq\dlow$}} 
\,\,\,
\!\!\!\!\!\!\!\!\!\!
=
\,\,\,
\big(4\dlow-1\big)\,n
\end{gather*}
{\bf Subcase (ii-b): there are two other clusters with strictly more than $\pmb{\nicefrac{n}{4}}$ nodes}
Let these clusters be $V_2$ and $V_3$. Then, $\dlow\, n<|V_1|<\nicefrac{n}{2}$. 
By Corollary~\ref{sm-cor}, $\sum_{j=4}^{\,m+1}\MS(V_j)\leq 0$.
Let $t_2=\frac{1}{4}+\delta_2$ and $t_3=\frac{1}{4}+\delta_3$ with $0<\delta_2\leq\delta_3<\frac{1}{2}-\dlow<\nicefrac{2}{100}$.
Thus,
\begin{gather*}
\MS(\cS)\leq \MS(V_1)+\MS(V_2)+\MS(V_3)
\\
\leq
\!\!\!\!\!\!\!\!\!\!\!\!\!\!\!\!
\underbrace{\left(2\dlow-\frac{1}{2}\right)\,n}_{\mbox{by Lemma~\ref{less-than-half} since $t_1<\nicefrac{1}{2}$}} 
+
\,\,\,
\underbrace{\left(4\delta_2^2+\delta_2\right)\,n}_{\mbox{by Corollary~\ref{sm-cor}}}
+
\,\,\,
\underbrace{\left(4\delta_3^2+\delta_3\right)\,n}_{\mbox{by Corollary~\ref{sm-cor}}}
\end{gather*}
Since 
$4\delta_2^2+\delta_2+4\delta_3^2+\delta_3<8{\left(\nicefrac{2}{100}\right)}^2+2\left(\nicefrac{2}{100}\right)<2\dlow-\frac{1}{2}$,
we have $\MS(\cS)\leq(4\dlow-1)\,n$.

\vspace*{0.1in}
\noindent
{\bf Case (iii): $\cS$ has two giant components} 
Let $V_1$ and $V_2$ be the two giant components with $t_1=\dlow+\mu_1$ and $t_2=\dlow+\mu_2$ for some 
$0<\mu_1\leq\mu_2<1-2 \dlow$. 
Since $\left|\cup_{j=3}^{m+1} V_i\right|=(1-t_1-t_2)\,n\leq(1-2\dlow)\,n<\nicefrac{n}{4}$, by Corollary~\ref{sm-cor}
$\sum_{j=3}^{m+1}\MS(V_j)\leq 0$.
Now, by reusing the calculations in the proof of the case of $t>\dlow$ of Lemma~\ref{less-than-half}
and using 
Lemma~\ref{complement}
we get
\begin{gather*}
\MS(\cS)=\MS(V_1)+\MS(V_2)+\sum_{j=3}^{m+1}\MS(V_j)
\\
\leq
\!\!\!\!\!\!\!\!\!\!\!\!
\underbrace{\left(2\dlow-\frac{1}{2}\right)\,n}_{\substack{\mbox{by Lemma~\ref{complement} if $t_1>\nicefrac{1}{2}$}\\ \mbox{by Lemma~\ref{less-than-half} if $t_1\leq \nicefrac{1}{2}$}} } 
+ 
\underbrace{\left(2\dlow-\frac{1}{2}\right)\,n}_{\substack{\mbox{by Lemma~\ref{complement} if $t_2>\nicefrac{1}{2}$}\\ \mbox{by Lemma~\ref{less-than-half} if $t_2\leq \nicefrac{1}{2}$}} } 
\,\,\,
\!\!\!\!\!\!\!\!\!\!\!\!
= 
\,\,\,\,\,\,
(4\dlow-1)\,n
\qedhere
\end{gather*}
\end{proof}

\subsection{Additive Approximations for Locally Dense Graphs}
\label{RL}

Using the algorithmic version of the regularity lemma in~\cite{FK96} we can show that
if the given graph is dense then, for any given constant $\alpha>0$, there is a polynomial-time algorithm that 
returns a solution of modularity value at least $\opt-\alpha$.

\begin{proposition}[constant addiditive error]\label{dense}
Suppose that the given graph $G=(V,E)$ is dense, \IE, 
$m=|E|=\delta n^2$ for some constant $0<\delta<\nicefrac{1}{2}$.
Then, for any given constant $0<\alpha<1$, there is a polynomial-time algorithm that 
returns a solution of value at least $\opt-\alpha$.
\end{proposition}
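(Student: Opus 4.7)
The plan is to reduce to $k$-clustering for small $k$ and then invoke the dense-PTAS machinery of~\cite{FK96}. The starting observation is that capping the number of clusters costs essentially nothing in additive error. Since $\opt \leq 1$, Lemma~\ref{ktotwo} gives $\opt_k \geq (1-\tfrac{1}{k})\opt \geq \opt - \tfrac{1}{k}$, so choosing $k = \lceil 2/\alpha\rceil$ forces $\opt_k \geq \opt - \alpha/2$. Hence it suffices, for each $k' \in \{1,2,\dots,\lceil 2/\alpha\rceil\}$, to compute a $k'$-clustering within additive error $\alpha/2$ of $\opt_{k'}$ and return the best clustering found over all such $k'$.

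Fix such a $k = O(1/\alpha)$. Encoding a $k$-clustering $\cS$ by indicator variables $x_{i,u} \in \{0,1\}$ with $\sum_i x_{i,u} = 1$ for every $u \in V$, and setting $B_{u,v} = a_{u,v} - \frac{d_u d_v}{2m}$, Equation~\eqref{eq:11} rewrites as
\[
\M(\cS) \;=\; \frac{1}{2m}\sum_{i=1}^k\sum_{u,v\in V} B_{u,v}\,x_{i,u}x_{i,v}.
\]
Because $x_{i,v}^2 = x_{i,v}$ and $\sum_i x_{i,v} = 1$, the diagonal terms contribute the fixed constant $\sum_v B_{v,v} = -\sum_v d_v^2/(2m)$, independent of $\cS$. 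What remains to optimize is a quadratic form over a MAX-$k$-PARTITION-type constraint set.

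This is exactly the setting addressed by the algorithmic regularity lemma of Frieze and Kannan~\cite{FK96}, which supplies an additive PTAS for dense quadratic assignment objectives. Since $d_u, d_v < n$ and $m = \delta n^2$, one has $|B_{u,v}| \leq 1 + \frac{d_u d_v}{2m} \leq 1 + \frac{1}{2\delta}$, so the off-diagonal quadratic form $Q(\cS) = \sum_{i,u,v} B_{u,v}x_{i,u}x_{i,v}$ satisfies $|Q(\cS)| = O(n^2/\delta)$; after multiplying by $\frac{1}{2m} = \Theta(1/n^2)$ the normalized objective sits in an $O(1)$ range. Running the Frieze--Kannan scheme with internal accuracy parameter $\eps = \Theta(\alpha\delta)$ then yields, in time polynomial in $n$ for every fixed $\alpha$, $\delta$, and $k$, a $k$-clustering within additive error $\alpha/2$ of $\opt_k$. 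Combined with the first reduction this gives the required $\opt - \alpha$ guarantee.

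The main obstacle will be verifying that the Frieze--Kannan PTAS applies verbatim to our coefficient structure: $B_{u,v}$ takes both signs, its diagonal is strictly negative, and its magnitude scales with $1/\delta$ rather than $O(1)$. The algebraic collapse of the diagonal into a clustering-independent constant neutralizes the first two concerns, reducing the task to a standard off-diagonal dense MAX-$k$-PARTITION quadratic maximization; the coefficient-magnitude issue then amounts to a harmless rescaling of the accuracy parameter inside~\cite{FK96}. A secondary point is that the hard partition constraint $\sum_i x_{i,u} = 1$ is a linear equality with integer coefficients and is directly accommodated by the polynomial integer programming framework of~\cite{FK96}.
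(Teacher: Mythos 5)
Your proof is correct and rests on the same engine as the paper's: the Frieze--Kannan additive approximation of~\cite{FK96}, run with accuracy $\Theta(\alpha\,\delta^{O(1)})$ so that the $\eps n^2$ additive error, after normalization by $\frac{1}{2m}=\frac{1}{2\delta n^2}$, becomes a constant additive loss in modularity. The packaging differs, though. The paper never bounds the number of clusters: using Equation~\eqref{eq:hh} it rewrites modularity \emph{exactly} as a maximum $\ell$-way cut instance on $K_n$ with weights $2\delta\left(\frac{d_u d_v}{2m}-a_{u,v}\right)$, so Theorem~\ref{fk} applies verbatim with $\eps=2\alpha\delta^2$ and there is no diagonal term to discuss. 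You instead first cap the number of clusters at $k=O(1/\alpha)$ via Lemma~\ref{ktotwo} (paying $\alpha/2$), write the objective as a $k$-partition quadratic program in the matrix $B_{u,v}=a_{u,v}-\frac{d_u d_v}{2m}$, observe that the diagonal contributes a clustering-independent constant, and appeal to a general dense quadratic-assignment/integer-programming framework in~\cite{FK96}. That appeal is the one soft spot---it is stated loosely---but it is easily made precise: since $\sum_{u,v}B_{u,v}=0$, maximizing the intra-cluster sum of $B$ is the same as maximizing the inter-cluster sum of $-B$, so your formulation \emph{is} the paper's $\ell$-way cut instance up to the $2\delta$ scaling, and the specific Theorem~\ref{fk} already suffices. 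In exchange, your explicit reduction to constant $k$ adds a bit of rigor that the paper glosses over, since the cited $\ell$-way cut guarantee is most naturally stated for a fixed number of parts; and your bookkeeping ($|B_{u,v}|\leq 1+\frac{1}{2\delta}$, internal accuracy rescaled by a $\Theta(\delta)$ factor) is consistent with the paper's choice $\eps=2\alpha\delta^2$ once the extra $2\delta$ weight scaling is accounted for.
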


\begin{proof}
The $\ell$-way cut problem is defined as follows. We are given an weighted graph $G=(V,E)$ with 
$w(u,v)\in\R$ being the weight of the edge $\{u,v\}\in E$. A valid solution is a partition of $V$ to $\ell$ subsets 
$\cS=\left\{S_1,S_2,\dots,S_\ell\right\}$, and the goal is to {\em maximize} the sum of weights of those edges whose end-points
are in different subsets, \IE, maximize $w(\cS)=\sum_{\{u,v\}\in E(\cS)} w(u,v)$, 
where $E(\cS)=\left\{\,\{u,v\}\,|\,\forall\,1\leq j\leq\ell\colon \left|\,\{u,v\}\,\cap S_j\right|\neq 2 \right\}$ is the set of all ``inter-partition'' edges. 
The following result was proved in~\cite{FK96}. 

\begin{theorem}{\rm~$\!\!\!$\cite{FK96}}\label{fk}
Given an weighted graph $G=(V,E)$ of $n$ nodes and 
any constant $0<\eps<1$ there is a polynomial-time algorithm $A_\epsilon$ which, 
computes a partition $\cS_\eps$ of $V$ such that 
\[
w(\cS_{\eps})\geq w(\cS^*)- \eps n^2
\]
 where $\cS^*$ is an optimal (maximum weight) partition.
\end{theorem}

Equation~\eqref{eq:hh} can be used to assign edge weights to cast our modularity clustering problem as an $\ell$-way cut 
problem in the following manner. Consider the complete graph on $n$ nodes ($K_n$) and let 
$w_{u,v}= 2\,\delta\left(\frac{d_u d_v}{2m} - a_{u,v}\right)$ for the edge $\{u,v\}$ of $K_n$.
Then, for a partition $\cS=\left\{S_1,S_2,\dots,S_\ell\right\}$ of the nodes of $K_n$, 
\[
w(\cS)=\sum_{\{u,v\}\in E(\cS)} \hspace*{-0.2in} 2\,\delta\left(\frac{d_u d_v}{2m} - a_{u,v}\right)
=2\,m\,\delta\,\M(\cS)=2\,\delta^2 n^2\, \M(\cS)
\]
Let $\apx_\eps$ be the objective value of an approximate solution of the modularity clustering problem on the given graph
obtained by using the $\ell$-way partitioning of Theorem~\ref{fk} with $\eps=2\,\alpha\,\delta^2$. Then, 
\[
2\delta^2 n^2\apx_\eps \geq 2\delta^2 n^2\opt - \eps n^2 \, \equiv \, \apx_\eps \geq \opt-\alpha
\qedhere
\]
\end{proof}

\section{Hardness and Approximation Algorithms for Sparse Graphs}
\label{nphard}

\subsection{$\NP$-hardness}

Brandes \EA~\cite{BDGGHNW07-1} proved $\NP$-hardness of the $2$-clustering problem
provided nodes with very large degrees are allowed in the input graph. 
Thus it is not a priori clear whether calculating modularity on very sparse graphs becomes easy and 
admits an {\em exact} polynomial-time algorithm.
However, we rule out this possibility of exact solution. 
Our construction is similar to that in~\cite{BDGGHNW07-1}, but carefully replaces dense graphs with {\em nicely behaving} sparse graphs. We have to do a more careful analysis of the properties of an optimal $2$-clustering so as to get the following result.

\begin{theorem}\label{nph}
Computing $\opt_2$ is $\NP$-complete even for $d$-regular graphs for any {\em constant} $d\geq 9$.
\end{theorem}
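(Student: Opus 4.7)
The plan is to reduce from the NP-hard problem of minimum graph bisection on 4-regular graphs. Given a 4-regular graph $H=(V,F)$ with $n$ vertices (we may assume $n$ is even) and a threshold $B$, the task is to decide whether $H$ has a bisection $(V_1, V_2)$ with $|V_1|=|V_2|=n/2$ and at most $B$ edges crossing. I would build a $d$-regular graph $G$ on $N$ vertices from $H$ so that $\opt_2$ of $G$ encodes the bisection width of $H$.

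The construction starts from several disjoint copies of $H$ and augments them with a $(d-4)$-regular auxiliary graph $H'$, edge-disjoint from the copies, so that the union $G$ becomes $d$-regular. The auxiliary $H'$ is designed to be ``symmetric'' enough that, in any balanced partition, it contributes a fixed (or at least well-understood) number of cut edges independent of which bisection is chosen inside each copy of $H$ --- for instance, using a specific low-degree structure such as a union of cycles or matchings spanning corresponding vertices of different copies. For $d=9$ we only need an extra $(d-4)=5$ degree per vertex, which is achievable with very sparse auxiliary structure; larger $d$ is handled analogously. This maintains both low constant degree and strict regularity, which is what~\cite{BDGGHNW07-1} does not achieve.

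The analysis rests on the 2-clustering formula for a $d$-regular graph on $N$ vertices derived from Equation~\eqref{eq:3}: for $\cS=\{V_1,V_2\}$ with $|V_1|=aN$ and $m_{12}$ crossing edges,
\[
\M(\cS) \;=\; 2\,a(1-a) - \frac{2\,m_{12}}{d\,N}.
\]
I would first prove that any optimal 2-clustering of $G$ must be perfectly balanced, that is $a=1/2$: the key local argument is that moving a single vertex across an already balanced partition cannot save enough cut edges (at most $d$ per swap) to compensate for the $\Theta(1/N)$ loss in the balance term. This rules out local improvements, and together with a global argument ruling out grossly imbalanced partitions (whose balance term $2a(1-a)$ already drops by $\Omega(1)$), forces $a=1/2$ at any optimum. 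Given balance, maximizing $\M(\cS)$ becomes minimizing $m_{12}$ over all balanced partitions, which by construction reduces to minimum bisection of $H$ plus the fixed contribution of $H'$. Choosing the threshold $M = 1/2 - \frac{2(kB+c)}{dN}$ where $c$ is the cut contribution of $H'$ and $k$ is the number of copies of $H$, the question ``$\opt_2 \geq M$?'' becomes equivalent to ``Does $H$ admit a bisection of size at most $B$?''.

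The main obstacle is designing the auxiliary $H'$ so that (i)~its contribution to the cut of any balanced partition is controlled in a partition-independent way, and (ii)~no imbalanced partition can beat any balanced one, all while keeping $d$ as small as $9$. Ensuring (ii) with such sparse padding is delicate: one has to rule out imbalanced configurations whose cut savings in $H$ could a priori dominate the small balance penalty of $\Theta(1/N^2)$ per unit of imbalance. This is precisely the point at which the high-degree clique gadgets of~\cite{BDGGHNW07-1} fail to adapt, and a more careful construction together with a refined exchange/case analysis over the low-degree auxiliary structure will constitute the bulk of the technical work.
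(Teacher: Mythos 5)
There is a genuine gap, and it sits exactly where you defer the work. Your balance argument is internally inconsistent and, as stated, false: near a balanced partition the balance term $2a(1-a)$ changes only by $\Theta(1/N^2)$ per vertex moved, while the cut term $2m_{12}/(dN)$ can change by up to $2/N$ (you first claim a $\Theta(1/N)$ balance loss, then correctly note in your last paragraph that it is $\Theta(1/N^2)$). Consequently, for a sparse $d$-regular graph there is no reason an optimal $2$-clustering is balanced at the vertex level, and no exchange argument over single vertices can force $a=1/2$; this is precisely the obstruction your ``refined exchange/case analysis'' would have to overcome, and you leave it unresolved. Your construction also fails to encode bisection even among balanced partitions: if you take several disjoint copies of the $4$-regular instance $H$ tied together by a sparse $(d-4)$-regular auxiliary graph $H'$, the cheapest balanced cut may simply separate whole copies along the few auxiliary edges (cost $O(n)$) instead of bisecting each copy (cost $\Theta(k\cdot\mathrm{bis}(H))$); and with a single copy, no sparse $H'$ has a partition-independent cut contribution over balanced partitions, so requirement (i) is not achievable as stated.

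The paper's proof resolves both issues with a different construction: it replaces \emph{each vertex} of the $4$-regular instance $G$ by an $n$-node $d$-regular gadget $H_v$ with minimum cut at least $d$ (built recursively from a cycle plus added cycles/matchings), deletes two independent edges of $H_v$ and reattaches the four $G$-edges at their endpoints, yielding a $d$-regular graph $G'$ on $n^2$ nodes. The point is that the ``unit'' of the balance argument is then a whole $n$-node gadget, not a vertex: in the scaled gain formula, moving an intact copy to the smaller side gains about $d\,n^2$ against a loss of at most $4n^2$ (only original $G$-edges can be newly cut), which forces exactly $n/2$ copies per cluster, while the minimum cut $\geq d-2$ of each modified copy makes splitting a copy strictly unprofitable when $3d-24>0$, i.e.\ $d\geq 9$. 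With both properties, $D_1=D_2=m$, the crossing edges are exactly $G$-edges, and $\opt_2\geq \frac{1}{2}-\frac{c}{m}$ iff $G$ has a bisection of width at most $c$. Your proposal shares the source problem and the target identity but lacks the per-vertex heavy gadget idea that makes balance and non-splitting provable at constant degree, so as written it does not yield the theorem.
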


\begin{proof}
The decision version $\DecProb{2BdRegModularity}$ of our problem is as follows:

\begin{quote}
{\em given a $d$-regular graph $G$ and a number $K$, is there a clustering $\cS$ of $G$ into at most two clusters for which $\M(\cS)\geq K$}?
\end{quote}

\noindent
Our reduction is from the minimum graph bisection problem for $4$-regular graphs ($\DecProb{MB4}$): 
{\sl Given a $4$-regular graph $G$ with $n$ nodes (with even $n$) and an integer $c$, is there a 
clustering into two clusters each of $\nicefrac{n}{2}$ nodes such that it ``cuts'' at most $c$ edges, \IE, at most $c$ 
edges have two end-points in different clusters}?
$\DecProb{MB4}$ is known to be $\NP$-complete~\cite{KPH99}. We reduce an instance $G$ of $\DecProb{MB4}$ to an instance of 
$\DecProb{2BdRegModularity}$ in a manner similar to that in~\cite{BDGGHNW07-1}. 
Every node in $G$ is replaced by a copy of an $n$-node $d$-regular graph $H$ such that 
the minimum cut (minimum number of edges in a cut) of $H$ is at least $d$.
Such a family of graphs can be constructed in the following recursive manner:
\begin{itemize}
\item
For $d=2$, the $2$-regular graph, namely a simple cycle consisting of $n$ nodes,
has a minimum cut of $2$ edges.

\item
For $d=3$, consider two simple cycles $H_1=(V_1,E_1)$ and $H_2=(V_2,E_2)$, each consisting of $\nicefrac{n}{2}$ nodes. Consider an arbitrary matching 
between the nodes of $H_1$ and $H_2$ and add the edges corresponding to this matching to obtain a $3$-regular 
graph $H=(V,E)$. Consider an arbitrary subset of nodes $V'\subset V$ of $H$. Then,
\begin{itemize}
\item
If $V'\cap V_1\neq\emptyset$ and $V'\cap V_2\neq\emptyset$, then the number of cut edges is at least $4$.

\item
Otherwise, assume that $V'\cap V_1=\emptyset$ (the other case is symmetric) and thus $\emptyset\subset V'\subseteq V_2$. 
If $V'=V_2$ then the number of cut edges is exactly $\nicefrac{n}{2}>2$. Otherwise, the number of cut edges is at least 
$2$ (corresponding to two edges of the cycle in $H_2$) plus $1$ (corresponding to one of the matching edges added).
\end{itemize}

\item
For $d>3$, a recursive construction of such graphs follows in a similar manner: take such a $(d-2)$-regular
graph $H$ on $n$ nodes for which the inductive hypothesis applies and add a simple cycle 
to $H$ all of whose edges are different from those in $H$. Consider a cut in this graph. By the 
induction hypothesis the cut contains at least $d-2$ edges of $H$ and at least $2$ additional edges 
of the new cycle added to $H$.
\end{itemize}
Let $H_v$ denote the copy of $H$ corresponding to the node $v\in G$. Delete two independent edges (\IE, edges without any common end-points) 
in $H_v$. The four edges connected to $v$ are now connected to the four endpoints of these deleted edges. 
This is done in order to make the final graph $G'$ $d$-regular\footnote{This is one step that is different from the reduction 
in~\cite{BDGGHNW07-1}, where every node in $G$ is 
replaced by a copy of $K_n$ producing the final graph with non-constant degrees. Since $G$ is $4$-regular, we need $d>8$.}.
Note that the number of nodes in the transformed graph $G'$ is $n^2$, whereas the number of edges is $m=\frac{d\,n^2}{2}$. 
Since two edges are removed from $H$ in the construction, the minimum cut in each modified copy of $H$ is at least $d-2$. 
The correctness of the reduction follows by showing that 
$\DecProb{MB4}$ has a solution with at most $c$ cut edges if and only if $\M(\cS^*)\geq \frac{1}{2} -\frac{c}{m}$.

Let $\cS^*$ be an optimal clustering of $G'$.

\begin{lemma}\label{exact2}
$\cS^*$ has exactly two clusters and $\M(\cS^*)>0$.
\end{lemma}

\begin{proof}
It suffices to show a clustering $\cS=\{C_1,C_2\}$ such that $\M(\cS)>0$. 
To this end, let $C_1=\{H_v\}$ for some $v$, and let $C_2$ contain the rest. Then using Equation~\eqref{eq:3} and 
the fact that $d(n-1)>4$, we get 
\[
\M(\cS) = \frac{D_1(2m-D_1)}{2m^2} - \frac{4}{m} = \frac{dn(dn^2-dn)}{\frac{d^2n^4}{2}} - \frac{4}{\frac{dn^2}{2}} = \frac{2d(n-1)-8}{dn^2} > 0
\qedhere
\]
\end{proof}

The next lemma shows how to normalize a solution without decreasing the modularity value.
Part~{\bf (a)} of the lemma states that $\cS^*$ cannot have any copy of $H$ split across clusters,
whereas part~{\bf (b)} implies that any optimal clustering has to be a bisection of the graph.

\begin{lemma}\label{lm:completeH}
It is possible to normalize an optimal solution $\cS^*$ without decreasing the modularity value 
such that the following two conditions hold: 
\begin{description}
\item[(a)]
For every $v\in G$, there exists a cluster $C\in\cS^*$ such that $H_v\subseteq C$.

\item[(b)]
Each cluster in $\cS^*$ contains exactly $\nicefrac{n}{2}$ copies of $H$.
\end{description}
\end{lemma}

\begin{proof}
Suppose the set of nodes of $G'$ is partitioned into three subsets $A$, $B$ and $C$. Let $\cS_1 = \left\{A\cup C, B\right\}$, and 
we want to transfer the nodes in $C$ to the other cluster to form the clustering $\cS_2 = \{A, B\cup C\}$. 
For any two disjoint subsets $X$ and $Y$ of nodes of $G'$, let 
$m_{XY}$ denote the number of edges one of whose endpoints is in $X$ and the other in $Y$ 
and $D_X = \sum_{v\in X} d_v$ denote the sum of degrees of nodes in $X$.
Then, using Equation~\eqref{eq:2} or Equation~\eqref{eq:3}, the gain in modularity $\Delta = \M(\cS_2) - \M(\cS_1)$ can be simplified and written as
$\Delta = \dfrac{(D_A - D_B)D_C}{2m^2} + \dfrac{m_{BC}-m_{AC}}{m}$.
Using the fact that $G'$ is $d$-regular and substituting for $m$, we get
\begin{equation}
\frac{dn^4}{2}\Delta = d\,|C|\,\big(\,|A|-|B|\,\big) + n^2\,\big(m_{BC}-m_{AC}\big)
\label{eq:gain} 
\end{equation}

\noindent
{\bf (a)} 
Let us assume that there exists a $v\in G$ such that $H_v$ is split across clusters in the optimal clustering 
$\cS^*=\{C_1,C_2\}$. Without loss of generality, we can assume that $|C_1\setminus H_v|\geq|C_2\setminus H_v|$. 
We will transfer the part of $H_v$ in $C_1$ from $C_1$ to $C_2$. Let $A = C_1\setminus H_v$, $B = C_2$, $C = H_v\setminus C_2$, and $|C|=k$.
Then the part of $H_v$ in $C_2$ has a size of $n-k$. 
By our assumption, 
\[
|A|-|B|=|C_1\setminus H_v|-|C_2| = |C_1\setminus H_v|-|C_2\setminus H_v|-|H_v\setminus C_2|\geq -(n-k)
\]
Substituting this in Equation~\eqref{eq:gain}, we get
\[
\frac{dn^4}{2}\,\Delta\geq d[-k(n-k)]+n^2(m_{BC}-m_{AC})
\]
Now, since the original graph $G$ was $4$-regular, at most $4$ extra inter-cluster edges will appear after the transfer. 
Thus, $m_{AC}\leq 4$. The term $m_{BC}$ represents the number of edges between $C_2$ and $H_v\setminus C_1$, 
which is at least the number of edges between the two parts of $H_v$. Thus, $m_{BC}$ is at least the number of edges
in a minimum cut of $H_v$ which is at least $d-2$. This gives
\[
\frac{dn^4}{2}\,\Delta\geq -dk(n-k)+n^2(d-2-4)\geq -\,d\,\frac{n^2}{4}+(d-6)n^2=\frac{(3d-24)n^2}{4}>0
\]
where the second inequality is due to the fact that $k(n-k)$ is maximized when $k=n/2$, and the last inequality is satisfied when $d\geq 9$. 
Hence the modularity can be strictly improved by putting each copy of $H$ completely in a cluster.

\vspace*{0.1in}
\noindent
{\bf (b)} 
By the previous part, each $H_v$ is contained completely in one cluster of $\cS^*=\{C_1,C_2\}$. 
Now assume that $C_1$ has more copies of $H$ than $C_2$. Since $n$ is even, this implies that $C_1$ has 
at least two more copies of $H$ than $C_2$. We will create a new clustering by transferring a copy of $H$ from 
$C_1$ to $C_2$. Then the gain in modularity after this transfer is given by Equation~\eqref{eq:gain}, where $C$ 
denotes the transferred copy of $H$, $B=C_2$ and $A=C_1\setminus C$. By our assumption, $|A|-|B|\geq |C|$. 
Therefore we can simplify the first term and get 
$\frac{dn^4}{2}\,\Delta\geq d\,|C|^2 + n^2(m_{BC}-m_{AC})$.
Also, since the original graph $G$ was $4$-regular, at most $4$ extra inter-cluster edges will appear after the transfer. Simplifying and substituting values,
$\frac{dn^4}{2}\,\Delta\geq dn^2 -4 n^2 > 0$.
Hence, the modularity can be strictly improved by balancing out the copies of $H$ in both clusters.
\end{proof}

Armed with the above lemma, one can now prove the $\NP$-completeness of our problem. 
We will use the above construction to reduce an instance $\langle G,c\rangle$ of $\DecProb{MB4}$ to an instance $\langle G',K\rangle$ of 
$\DecProb{2BdRegModularity}$ with $K=\frac{1}{2}-\frac{c}{m}$. Now suppose $\cS^*=\{C_1,C_2\}$ is an optimal 2-clustering of $G'$. Then, 
$\M(\cS^*) = \frac{D_1D_2}{2m^2}-\frac{m_{12}}{m}$.
By Lemma~\ref{lm:completeH}{\bf (b)}, $D_1=D_2=m$. Also, because of Lemma~\ref{lm:completeH}{\bf (a)}, $m_{12}$ only has edges from $G$, thus 
representing a bisection of $G$. Therefore, $m_{12}\leq c$ if and only if
$\M(\cS^*) \geq \frac{1}{2}-\frac{c}{m}=K$.
\end{proof}

\subsection{Large Integrality Gap for an $\ILP$ Formulation}
\label{lp-relax}

\begin{figure}[ht]
\begin{center}
\begin{tabular}{l}
\toprule
{\em maximize} $\displaystyle\frac{\sum_{\{u,v\,:\,u\neq v\}}\left(a_{u,v}-\frac{d_u d_v}{2m}\right)\left(1-x_{u,v}\right)}{2m}\,-\,{\sum_{v\in V}\frac{d_v^{\,2}}{2\,m}}$ \\
{\em subject to}  
$\forall\, u\neq v\neq z\colon x_{u,z}\leq x_{u,v}+x_{v,z}$ \\
\hspace*{0.75in} $\forall\, u\neq v\colon 0\leq x_{u,v}\leq 1$ \\ 
\bottomrule
\end{tabular}
\end{center}
\caption{\label{fff1}$\LP$-relaxation of modularity clustering~\cite{AK08,BDGGHNW07-1,cor2}.}
\end{figure}

There is an integer linear programming ($\ILP$)
formulation of modularity clustering with arbitrarily many clusters
as shown in Fig.~\ref{fff1}: 
$x_{u,v}=0$ 
if $u$ and $v$ belong to the {\em same} cluster and $1$ otherwise, and 
the ``triangle inequality'' constraints 
$x_{u,z}\leq x_{u,v}+x_{v,z}$
ensure that if $\{u,v\}$ and $\{v,z\}$ belong to the same cluster then 
$\{u,z\}$ also belongs to the same cluster.
Agarwal and Kempe~\cite{AK08} used such an $\LP$-relaxation with several rounding 
schemes for empirical evaluations. However, as we show below, the worst case integrality gap of the
$\LP$-relaxation is at least about the square root of the degree of the graph, thereby ruling out logarithmic 
approximations via rounding such $\LP$-relaxations.

\begin{lemma}\label{integrality}
For every $d>3$ and for all sufficiently large $n$, there exists
a $d$-regular graph with $n$ nodes such that 
the integrality gap of the $\LP$-relaxation in Fig.~\ref{fff1} is $\Omega(\sqrt{d}\,)$. 
\end{lemma}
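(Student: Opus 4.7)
The plan is to exhibit a $d$-regular expander for which the $\LP$-optimum is $\Omega(1)$ while the integer modularity is $O(1/\sqrt{d}\,)$, yielding the claimed gap. I would take $G$ to be a $d$-regular graph on $n$ vertices whose largest-in-absolute-value non-principal eigenvalue satisfies $\lambda \le 2\sqrt{d-1}$: concrete examples are Lubotzky--Phillips--Sarnak Ramanujan graphs when $d-1$ is a prime power, and by Friedman's theorem a uniformly random $d$-regular graph achieves $\lambda \le 2\sqrt{d-1}+o(1)$ with high probability, so such $G$ exist for every $d>3$ once $n$ is large.

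\emph{Integral upper bound.} For any clustering $\cS=\{C_1,\ldots,C_k\}$ write $|C_i|=\alpha_i n$, so $D_i = d\alpha_i n$ and $(D_i/2m)^2 = \alpha_i^2$. The expander mixing lemma applied to $(C_i,C_i)$ gives $2m_i \le d\alpha_i^2 n + \lambda\alpha_i n$, hence $m_i/m \le \alpha_i^2 + \lambda\alpha_i/d$. Substituting into Equation~\eqref{eq:2} causes the $\alpha_i^2$ terms to cancel, giving
$$
\M(\cS) \,\le\, \sum_{i=1}^{k}\frac{\lambda\,\alpha_i}{d} \,=\, \frac{\lambda}{d} \,\le\, \frac{2}{\sqrt{d-1}}.
$$
Thus $\opt(G) = O(1/\sqrt{d}\,)$.

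\emph{Fractional lower bound.} Consider the explicit feasible vector $x^{\star}_{u,v} = \tfrac{1}{2}$ when $\{u,v\}\in E$ and $x^{\star}_{u,v} = 1$ otherwise. All triangle inequalities hold: the only tight case is three pairwise adjacent vertices, where $\tfrac{1}{2} \le \tfrac{1}{2}+\tfrac{1}{2}$. In the Fig.~\ref{fff1} objective every non-edge contributes $0$ (since $1-x^{\star}=0$), while every ordered edge pair contributes $(1-d/n)\cdot\tfrac{1}{2}$; summing, dividing by $2m=dn$, and absorbing the $\Theta(1/n)$ constant correction, the $\LP$-value is at least $\tfrac{1}{2} - O(d/n) \ge \tfrac{1}{4}$ for $n$ sufficiently large.

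Combining the two bounds, the integrality gap is at least $(1/4)\big/(2/\sqrt{d-1}\,) = \Omega(\sqrt{d}\,)$. The only step that goes beyond routine algebra is the modularity upper bound for expanders; its content reduces entirely to the expander mixing lemma, which captures the fact that no vertex subset in an expander carries substantially more internal edges than its degree sequence predicts---and excess of internal edges over that prediction is precisely what modularity rewards.
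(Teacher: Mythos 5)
Your proposal is correct and follows essentially the same route as the paper: the same half-integral fractional solution ($x_{u,v}=\tfrac12$ on edges, $1$ otherwise) giving LP value about $\tfrac12$, and spectral expanders with the expander mixing lemma to force the integral optimum down to $O(1/\sqrt{d}\,)$. The only difference is cosmetic but pleasant: you apply the mixing lemma to every cluster of an arbitrary clustering and sum, bounding $\M(\cS)\le\lambda/d$ directly, whereas the paper first invokes $\opt\le 2\,\opt_2$ (Lemma~\ref{ktotwo}) and $\M(V_1)=\M(V_2)$ (Lemma~\ref{subgraph-selection}) and applies the mixing lemma only to a $2$-clustering; your version avoids that reduction and even sharpens the constant (your minor misstatement about which triangle inequality is tight --- it is the edge--edge--non-edge case $1\le\tfrac12+\tfrac12$ --- does not affect feasibility).
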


\begin{proof}
Let $\opt_f$ be the optimal objective value of the $\LP$-relaxation.
For {\em any} graph $G=(V,E)$, a valid fractional solution of the $\LP$-relaxation is
as follows: set $x_{u,v}=\frac{1}{2}$ for {\em every} $\{u,v\}\in E$ and set $x_{u,v}=1$ otherwise.
The value of this fractional solution is precisely $\frac{1}{2}\,-\sum_{v\in V} \frac{d_v^{\,2}}{2m}$.
Thus, in particular, if $G$ is a $d$-regular graph then $\opt_f\geq \frac{1}{2}-\frac{1}{n}$.  

On the other hand, suppose that $G$ is a random $d$-regular graph and 
let $\lambda$ be the second largest eigenvalue of the adjacency matrix $A$ of $G$.
It is well-known that $\lambda<\beta\sqrt{d}$ for some positive constant $\beta$~\cite{FKS89}.
Consider an optimal solution $\emptyset\subset V'\subset V$ of $2$-clustering of $G$
with $0<|V'|=\alpha n\leq n/2$ and  
let $\mbox{cut}(V')$ denote the number of edges between $V'$ and $V\setminus V'$.
By the {\em expander mixing lemma}, we have 
\begin{gather*}
\left|\, \mathrm{cut}(V') - d\frac{(\alpha n)\times (1-\alpha)n}{n}\, \right|
\leq
\lambda \sqrt{(\alpha n) (1-\alpha)n}
\,\,\,
\\
\equiv
\,\,\,
\left|\, \mathrm{cut}(V') - \alpha(1-\alpha)\,d\,n\, \right|
\leq
\lambda \sqrt{\alpha (1-\alpha)}\,n
\end{gather*}
which implies 
$\mbox{cut}(V')\geq\alpha(1-\alpha)\,d\,n -\lambda \sqrt{\alpha(1-\alpha)}\,n>\alpha(1-\alpha)\,d\,n -\beta \sqrt{d}\,n$.
Let $\mbox{uncut}(V')$ denote the number of edges between pairs of nodes in $V'$.
Then, 
$\mbox{uncut}(V')=\frac{\alpha\,d\,n -\mathrm{cut}(V')}{2}<\frac{\alpha^2\,d\,n +\beta\sqrt{d}\,n}{2}$.
Using this in Equation~\eqref{eq:2} (with $m=dn/2$) 
together with Lemma~\ref{ktotwo}~and~\ref{subgraph-selection} shows
\begin{gather*}
\M(V')=\frac{2\times\mbox{uncut}(V')}{d\,n}-\left(\frac{\alpha\,d\,n}{d\,n}\right)^2<\frac{\beta}{\sqrt{d}}
\\
\Longrightarrow\,\,
\opt\leq 2\,\opt_2=4\,\M(V')<\frac{4\,\beta}{\sqrt{d}}
\,\,\Longrightarrow\,\,
\frac{\opt_f}{\opt}=\Omega(\sqrt{d}\,)
\qedhere
\end{gather*}
\end{proof}

\subsection{Logarithmic Approximation}
\label{log-regular}
\label{weighted}

Newman~\cite{N04} extended the modularity measure to weighted graphs in the following manner.
Let $G=(V,E,\ell)$ be the input weighted graph with $\ell:E\mapsto\R^+$ being the function mapping edges to 
{\em non-negative} real-valued weights. Now, if we redefine $d_u=\sum_{\{u,v\}\in E}\ell(u,v)$ as the ``weighted'' degree of the node $u$,  
$m=\sum_{u\in V}d_u$, and $A=\left[a_{u,v}\right]$ as the {\em weighted} adjacency matrix of $G$ 
(\IE, $a_{u,v}=\ell(u,v)$ if $\{u,v\}\in E$ and $0$ otherwise), then Equation~\eqref{eq:11} applies to the weighted case also.
The corresponding modification in Equation~\eqref{eq:2} can be obtained by redefining
$m_i$ as the {\em total weight} of edges whose both endpoints are in the cluster $C_i$, 
$m_{ij}$ as the {\em total weight} of edges one of whose endpoints is in $C_i$ and the other in $C_j$ 
and $D_i = \sum_{v\in C_i} d_v$ as the sum of {\em weighted degrees} of nodes in cluster $C_i$.
It is straightforward to see that Lemma~\ref{ktotwo} holds even for weighted graphs.

{\em We denote the weighted degree, the maximum weighted degree and the average weighted degree of a node $v$ by $d_v$, 
$d_{\max}=\max_{v\in V}\{d_v\}$ and $\Delta=\frac{\sum_{v\in V}d_v}{n}$, respectively}, and, 
for convenience, {\em we normalize\footnote{It is easy to see that the modularity value of any clustering remains unchanged if all weights
are scaled by the same factor.} all the weights such that $\sum_{v\in V}d_v$ is twice the number of edges of $G$}.

\begin{theorem}\label{log-approx}~\\
\noindent
{\bf (a)}
There exists a polynomial time $O(\log d)$-approximation for $d$-regular graphs with $d<\frac{n}{2\ln n}$.

\noindent
{\bf (b)}
There exists a polynomial time $O(\log d_{\max})$-approximation for weighted graphs
$d_{\max}<\frac{\sqrt[5]{n}}{16\ln n}$.
\end{theorem}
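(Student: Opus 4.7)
Since by Lemma~\ref{ktotwo} we have $\opt_2 \geq \opt/2$, it suffices to produce a $2$-clustering whose modularity is $\Omega(\opt/\log d)$. Following the two-step strategy announced in the introduction, the plan has two ingredients: a combinatorial lower bound $\opt = \Omega(1/d)$ (resp.\ $\opt = \Omega(1/d_{\max})$) obtained from an explicit construction, and an SDP-based rounding for the quadratic form underlying $2$-clustering.

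First, I would build an explicit $2$-clustering whose modularity is at least $\Omega(1/d)$. The natural candidate is to take one cluster $C$ to be a BFS ball of appropriate radius around a vertex, with $|C|$ chosen so that the internal edge contribution $m_C/m$ dominates the degree-squared penalty $(D_C/2m)^2$ by the required amount. Using Equation~(\ref{eq:2}) one verifies that for $|C|=\Theta(n/d)$ the quantity $m_C/m - (D_C/2m)^2$ is at least $\Omega(1/d)$; for the weighted version one rescales using $\sum_v d_v = 2m$ and carries $d_{\max}$ in place of $d$ throughout. Lemma~\ref{subgraph-selection} then yields the desired lower bound on $\opt$.

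Second, I would reformulate $2$-clustering as a quadratic maximization. Encoding cluster membership by $x_v \in \{-1,+1\}$ and using $\sum_v B_{uv}=0$, one obtains
\[
\M(\cS)\,=\,\frac{1}{4m}\sum_{u,v}B_{uv}\,x_u x_v\,=\,\frac{1}{4m}\,x^\top B\,x,\qquad B_{uv}=a_{uv}-\frac{d_u d_v}{2m}.
\]
The matrix $B$ has negative diagonal $B_{vv}=-d_v^2/(2m)$. Writing $B = B^{\circ} + \mathrm{diag}(B)$, the contribution $x^\top \mathrm{diag}(B)\,x = -\sum_v d_v^2/(2m)$ is a $\pm 1$-independent constant (equal to $-\Theta(d)$, resp.\ $-\Theta(d_{\max})$), so maximizing $x^\top B x$ is equivalent to maximizing $x^\top B^{\circ} x$ up to this shift. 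Applying the Charikar--Wirth SDP rounding to the zero-diagonal $B^{\circ}$ produces $x^\ast$ with $x^{\ast\top}B^{\circ}x^{\ast} \geq \max_x x^\top B^{\circ} x / O(\log n)$, and subtracting the constant yields a clustering $\cS^\ast$ with $\M(\cS^\ast) \geq \opt_2/O(\log n) - O(1/n)$.

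The algorithm finally returns the better of $\cS^\ast$ and the explicit clustering from Step~1. With $\opt_2 \geq \Omega(1/d)$ and the hypothesis $d < n/(2\ln n)$, the additive offset $O(1/n)$ is dominated by $\opt_2/\log n$, which already gives an $O(\log n)$-approximation. The main obstacle, and the bulk of the technical work, is to refine $\log n$ to $\log d$: this requires sharpening the Charikar--Wirth analysis to exploit the sparsity of $B^{\circ}$ (only $O(dn)$ non-zero off-diagonal entries), so that the norm bounds driving the rounding scale with $d$ rather than $n$. The stringent hypothesis $d<n/(2\ln n)$, and its weighted analogue $d_{\max}<\sqrt[5]{n}/(16\ln n)$ (which must also absorb the additional slack introduced by the heterogeneity of the weights), is precisely what is needed so that the combinatorial lower bound from Step~1 and the SDP loss balance at the $\log d$ scale.
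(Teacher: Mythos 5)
Your overall architecture (reduce to $2$-clustering via Lemma~\ref{ktotwo}, lower-bound $\opt$ by an explicit construction, then round an SDP for the quadratic form) matches the paper's, but both of your key steps have genuine gaps. First, the explicit lower bound: a BFS ball $C$ with $|C|=\Theta(n/d)$ does \emph{not} in general satisfy $\M(C)=\Omega(1/d)$. In a $d$-regular graph of large girth (e.g.\ a random $d$-regular graph) such a ball induces only $O(|C|)$ internal edges, so $m_C/m=O(1/d^2)$ while the penalty $(D_C/2m)^2=\Theta(1/d^2)$; the ball's modularity is only $\Theta(1/d^2)$, an order of magnitude too small (and on expanders the optimum is attained by clusters of linear size, not small balls). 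The paper's Lemma~\ref{lbq} instead obtains $\opt>\frac{0.86}{d}-\frac{4}{n}$ from a near-perfect matching (Henning--Yeo): each matched pair is its own $2$-node cluster contributing $\frac{2}{dn}-\frac{4}{n^2}$, and there are more than $0.43n$ such pairs. Likewise the weighted bound $\opt>\frac{1}{8d_{\max}}$ (Lemma~\ref{sk}) is not a matter of ``carrying $d_{\max}$ in place of $d$'': it requires the greedy star-clustering of Fig.~\ref{f1} restricted to edges of weight at least $\frac{1}{2}$, since heavy matchings need not exist among arbitrary weights.

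Second, the step you defer --- improving $O(\log n)$ to $O(\log d)$ by ``sharpening the Charikar--Wirth analysis to exploit the sparsity of $B^{\circ}$'' --- is the actual content of the theorem, and the route you indicate cannot work as stated: $B^{\circ}$ is dense, since every non-adjacent pair $u\neq v$ carries the nonzero entry $-\frac{d_ud_v}{2m}$, so there are $n(n-1)$ nonzero off-diagonal entries and no sparsity to exploit. No modification of Charikar--Wirth is needed. Their guarantee (Theorem~\ref{SDP}) has a free parameter $T$ with additive error $8e^{-T^2/2}\sum_{i\neq j}|z_{i,j}|$; the paper bounds the total off-diagonal mass by $\W'<2$ (Lemma~\ref{sum-coeff}), and since the matching (resp.\ greedy) bound gives $\opt_2'=\Omega(1/d)$ (resp.\ $\Omega(1/d_{\max})$), choosing $T=\Theta\big(\sqrt{\ln d}\,\big)$ already makes the additive error $O(1/d^2)=o\!\left(\opt_2'/\ln d\right)$, yielding the $O(\log d)$ (resp.\ $O(\log d_{\max})$) factor directly. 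The hypotheses $d<\frac{n}{2\ln n}$ and $d_{\max}<\frac{\sqrt[5]{n}}{16\ln n}$ are then used to show that the diagonal correction $\D$ (equal to $\frac{1}{n}$ in the regular case) is negligible compared to $\opt_2/\ln d$ after rounding --- not, as you suggest, to balance the SDP loss against the combinatorial lower bound.
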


\begin{proof}
We begin with the approximation algorithm for regular graphs, which is somewhat easier 
to analyze, and later generalize the results for weighted graphs.
A common theme for both the proofs is the following approach.
By Lemma~\ref{ktotwo} $\opt_2\geq\nicefrac{\opt}{2}$, and thus it suffices to provide a logarithmic approximation
for the $2$-clustering problem on $G$. 
For notational convenience let 
$w_{u,v}=\dfrac{a_{u,v}-\frac{d_u d_v}{2m}}{2\,m}$.
As observed in~\cite{M06-2}, 
letting $x_u\in\{-1,1\}$ be the indicator variable denoting the partition that node $u\in V$ belongs to, 
Equation~\eqref{eq:1} can be rewritten for a $2$-clustering as
$
\M(\cS)=
\sum_{u,v\in V} w_{u,v}\left(1+x_u x_v\right)
=
\sum_{u,v\in V} w_{u,v} x_u x_v
=
\x^{\mathrm{T}}W\x
$
where $\x\in \{-1,1\}^n$ is a column vector of the indicator variables
and $W=\left[w_{u,v}\right]\in\R^{n\times n}$ is the corresponding symmetric matrix. 
The following result is known on quadratic forms.

\begin{theorem}{\rm\cite{CW04}}\label{SDP}
Consider maximizing
$\x^{\mathrm{T}}Z\x$ subject to $\x\in\{-1,1\}^n$, 
where $Z=[z_{i,j}]$ is a $n\times n$ real matrix with $z_{i,i}\geq 0$. 
Then, for any $T>1$, there exists a randomized approximation algorithm whose objective value $\kappa$ satisfies 
$\displaystyle\expect[\kappa]\geq \frac{\max_{\x\in\{-1,1\}^n} \x^{\mathrm{T}}Z\x}{T^2} - 8\,\e^{-\nicefrac{T^2}{2}}\left(\sum_{i\neq j}|z_{i,j}|\right)$.
\end{theorem}

The above approximation does not directly apply to the quadratic form for modularity clustering 
since the diagonal entries are negative for our case. 
Moreover, the lower bound on the optimal value of the quadratic form as used in~\cite{CW04}
depends on $n$ which we would like to avoid.

\vspace*{0.1in}
\noindent
{\bf (a) The Case When the Input Graph is Regular}.

The proof of the following lemma uses a result in 
~\cite{HY07} on the size of a maximum-cardinality matching of a regular graph.
The above lemma is tight in the sense that there exist $d$-regular graphs for which 
$\opt=O\left(\nicefrac{1}{\sqrt{d}}\right)$ (the proof of Lemma~\ref{integrality} shows that 
$d$-regular expanders are one such class of graphs).

\begin{lemma}[Lower Bound for $\opt$]\label{lbq}
If $n>40d^{\,9}$ then $\opt>\frac{0.26}{\sqrt{d}}$, else $\opt>\frac{0.86}{d}-\frac{4}{n}$.
\end{lemma}

\begin{proof}
Consider a maximum-cardinality matching $\{u_1,v_1\},\dots,\{u_k,v_k\}$ of
$G$ of size $k$. 
It is known~\cite{HY07} that for any $d>2$, 
\[
k\geq\left\{
\begin{array}{ll}
\min\left\{\dfrac{n(d^2+4)}{2d^2+2d+4}\,,\,\dfrac{n-1}{2}\right\}, & \mbox{if $d$ is odd} \\
\dfrac{(d^3-d^2-2)n-2d+2}{2(d^3-3d)}\,, & \mbox{otherwise} \\
\end{array}
\right.
\]
which gives $k>0.43\,n$ for any $d$.
We create $k$ clusters $\{V_1,V_2,\dots,V_k\}$ where $V_i=\{u_i,v_i\}$ and 
for each remaining node $u\in V\setminus\left(\cup_{\,i=1}^{\,k} V_i\right)$ we create a cluster $\{u\}$ of one node.
Using Equation~\eqref{eq:2}, we have 
\[
M(\cS)=\sum_{C_i}\,\left[\frac{m_i}{m}-\left(\frac{D_i}{2m}\right)^2\right]
=\sum_{i=1}^k \left(\frac{2}{dn}-\frac{4}{n^2}\right)
-\sum_{i=k+1}^n \frac{1}{n^2}>\frac{0.86}{d}-\frac{4}{n}
\]
For fixed $d$ and $n>40d^{\,9}$, it was shown in~\cite{A97} that
every $d$-regular graph with $n$ nodes has a bisection width
of at most $\left(\frac{d}{2}-0.13\sqrt{d}\right) \left(\frac{n}{2}\right)$.
Consider the partition $\cS$ of $G$ into two clusters $C_1$ and $C_2$ corresponding to such a bisection
with exactly $\nicefrac{n}{2}$ nodes in each cluster.
Then, $m=\frac{dn}{2}$, $D_1=D_2=m$, $m_1,m_2>\left(\frac{d}{2}+0.13\times\sqrt{d}\right) \left(\frac{n}{4}\right)$
and using Equation~\eqref{eq:2} we get
$\M(C_1)=\M(C_2)>\frac{0.13}{\sqrt{d}}$. Consequently, by Lemma~\ref{subgraph-selection}
$\M(\cS)>\frac{0.26}{\sqrt{d}}$.
\end{proof}

We now define the following quantities:
\begin{itemize}
\item
$\D=\sum_{v\in V}|w_{v,v}|$.

\item
$W'=\left[w_{u,v}'\right] \mbox{where } w_{u,v}'=\left\{\begin{array}{ll} 0, & \mbox{if $u=v$} \\ w_{u,v}, & \mbox{otherwise} \\ \end{array}\right.$.

\item
$\W'=\sum_{u,v\in V}|w_{u,v}'|$.
\end{itemize}
Thus, if $\opt_2=\max_{\x\in\{-1,1\}^n} \x^{\mathrm{T}}W\x$ and 
$\opt_2'=\max_{\x\in\{-1,1\}^n} \x^{\mathrm{T}}W' \x$ then $\opt_2'=\opt_2-\D$. 

\begin{lemma}\label{sum-coeff}
$\W'<2$. 
\end{lemma}

\begin{proof}
\begin{gather*}
\W' 
< \!\!\! 
\sum_{u,v\in V}\!\!\big|w_{u,v}\big| 
= \!\!\! 
\sum_{w_{u,v}\geq 0} \!\!\!\! w_{u,v} - \!\!\!\!\sum_{w_{u,v}<0} \!\!\!\! w_{u,v}  
= \hspace*{-0.85in}
\underbrace{2 \left( \!\!\!\!\!\!\! \sum_{\,\,\,\,\,\,\,\,\,\,w_{u,v}\geq 0} \!\!\!\!\!\!\! w_{u,v}\!\!\right)}_{\mbox{since $\displaystyle \!\!\!\sum_{u,v\in V}\!\!\!w_{u,v}\!=\!\!\!\!\sum_{w_{u,v}\geq 0}\!\!\!w_{u,v}\!-\!\!\!\!\!\sum_{w_{u,v}<0}\!\!\!w_{u,v}=0$}} 
\hspace*{-0.9in} < 
\frac{\displaystyle \!\!\!\!\!\!\!\!\!  \sum_{\,\,\,\,\,\,\,\,\,\,\,\,\{u,v\}\in E} \!\!\!\!\!\!\!\!\! a_{u,v}}{m} =2  
\end{gather*}
\end{proof}

Next, we bound $\D$ by observing that, for any $d$,  
$\D=\frac{d^2n}{4m^2}=\frac{1}{n}$.
To complete the proof,  we use the algorithm in Theorem~\ref{SDP} with $Z=W'$. Using Lemmas~\ref{ktotwo},~\ref{lbq}~and~\ref{sum-coeff} 
we get the desired approximation guarantees of Theorem~\ref{log-approx} by choosing $T=\sqrt{4\ln d}$ in 
the algorithm in Theorem~\ref{SDP}. Then we have the following chain of implications for all sufficiently large $d$ and $n$:
\begin{itemize}
\item
$\opt_2'=\opt_2-\D\geq\frac{\opt}{2}-\D>\frac{0.43}{d}-\frac{1}{n}>\frac{0.43}{d}-\frac{1}{2d\ln n}>\frac{0.4}{d}$.

\item
Thus, 
$\dfrac{\W'}{\opt_2'}<\frac{2d}{0.4}=5d$.

\item
Thus, 
$\expect[\kappa]>\frac{\opt_2'}{T^2}-4\,\e^{\!-\frac{T^2}{2}}d\,\opt_2'=\frac{\opt_2'}{4\ln d}-\frac{4d}{d^{2}}\,\opt_2'>\frac{\opt_2'}{4.1\ln d}$.
\end{itemize}
Thus, the final modularity value achieved is at least 
\begin{gather*}
\frac{\opt_2'}{4.1\ln d}-\D
=
\frac{\opt_2-\D}{4.1\ln d}-\D
\\
=
\frac{\opt_2}{4.1\ln d}\,-\,\left(1+\frac{1}{4.1\ln d}\right)\left(\frac{0.4}{d}+\frac{1}{n}\right)\left(\frac{d}{d+0.4n}\right)
\\
>
\left( \frac{1}{4.1\ln d}\,-\, \left(1+\frac{1}{4.1\ln d}\right)\left(\frac{1}{1+0.4\frac{n}{d}}\right) \right) \opt_2
\\
>
\left( \frac{1}{4.1\ln d}\,-\, \left(1+\frac{1}{4.1\ln d}\right)\left(\frac{1}{1+0.8\ln n}\right) \right) \opt_2
>
\frac{\opt_2}{4.2\ln d}
>
\frac{\opt}{8.4\ln d}
\end{gather*}

\vspace*{0.1in}
\noindent
{\bf (b) The Case When the Input Graph is Weighted}

Since the given graph can be assumed to be connected, $\Delta\geq 1-\frac{1}{n}$. 
We want to design an $O\left(\log d_{\max}\right)$-approximation algorithm assuming 
$d_{\max}<\frac{\sqrt[5]{n}}{16\ln n}$. Again, we first provide a lower bound for $\opt$.

\begin{center}
\begin{figure}[ht]
\begin{center}
\begin{tabular}{l}\toprule
(* $\cS$ denotes the set of clusters *) \\
(* initialization *) \\
$\cS=\emptyset\,$; $V''=V\,$; $E''=E'=\big\{\,\{u,v\}\,|\,\{u,v\}\in E \, \& \, \ell(u,v)<\nicefrac{1}{2} \big\} \,$; $\forall\, u\in V\colon C_u=\emptyset$ \\
(* Algorithm *) \\
{\bf while} the graph $(V'',E'')$ contains at least one edge {\bf do} \\
\hspace*{0.3in} pick a node $v\in V''$ that maximizes $L(v)=\sum_{\{u,v\}\in E''}\ell(u,v)$ \\
\hspace*{0.3in} $C_v=\{v\}\cup \{u\,|\,\{u,v\}\in E''\}\,$; add the new cluster $C_v$ to $\cS$ \\
\hspace*{0.3in} $V''=V''\setminus C_v\,$; $E''=(V''\times V'')\cap E'$ \\
{\bf endwhile} \\
{\bf for} every $v\in V''$ {\bf do} \\
\hspace*{0.3in} add the cluster $\{v\}$ to $\cS$ \\ 
{\bf endfor} \\
\bottomrule
\end{tabular}
\end{center}
\caption{\label{f1}Greedy algorithm for computing lower bounds for weighted graphs.}
\end{figure}
\end{center}

\begin{lemma}[Lower bound on $\opt$ for weighted graphs]\label{sk}
If $d_{\max}<\dfrac{\sqrt[5]{n}}{16\ln n}$ then $\opt>\dfrac{1}{8\,d_{\max}}$.
\end{lemma}

\begin{proof}
We execute the greedy algorithm on $G'$ as shown in Fig.~\ref{f1}.
Note that the graph $G'=(V,E')$ has a maximum weighted degree of precisely $d_{\max}$, 
The number of nodes adjacent to any node $v$ in $G'$ is at most $2\,d_v\leq 2\,d_{\max}$, 
and $\ell(E')=\sum_{\{u,v\}\in E'}\ell(u,v)=m-\sum_{\{u,v\}\in E\setminus E'}\ell(u,v)\geq \nicefrac{m}{2}$.

Let $L(C_v)=\sum_{\substack{u,v\in C_v \\ u\neq v}}\ell(u,v)$. 
Since the weight of any edge in $E'$ is at least $\nicefrac{1}{2}$, 
it is easy to see that during each selection of cluster $C_v$, $L(C_v)$ is at least 
$\nicefrac{1}{d_{\max}}$ times the total weight of edges whose one end-point was in $C_v$. 
Thus, $\sum_{C_v}\! L(C_v)\geq \dfrac{\ell(E')}{d_{\max}+1}\geq\dfrac{m}{2\,\left(d_{\max}+1\right)}=\dfrac{n\,\Delta}{2\,\left(d_{\max}+1\right)}$.
Note that for all sufficiently large $n$,  
\[
w_{u,v}
=
\left\{
\begin{array}{ll}
\dfrac{\ell_{u,v}-\frac{ d_u d_v}{n\, \Delta}}{n\, \Delta}
\geq
\dfrac{\ell_{u,v}-\frac{\left(d_{\max}\right)^2}{n\, \Delta}}{n\, \Delta}
\geq 
\dfrac{\ell_{u,v}}{2\, n\, \Delta}\,\,,
& \mbox{if $\{u,v\}\in E$} \\
& \\
\dfrac{-\,d_u d_v}{\left(n\, \Delta\right)^2} 
\geq
\dfrac{-\left( d_{\max}\right)^2}{n^2\Delta^2}
\geq
-\,\dfrac{1}{256\, n^{1.6} \ln^2 n\, \Delta^2}\,\,,
& \mbox{otherwise.} \\
\end{array}
\right.
\]
Thus, for all sufficiently large $n$, we have 
\begin{gather*}
\M(\cS)
=
\sum_{v}\M(C_v)\,- \hspace*{-0.3in} \sum_{u\in V\,\setminus \left(\bigcup_{\,v} C_v\right)} \hspace*{-0.3in} w_{u,u} 
\geq 
\dfrac{\displaystyle\sum_{\substack{C_v\in\cS\\C_v\neq\emptyset}}
\!\! \left(\!\!\!\!\!\! \sum_{\,\,\,\,\,\,\,\,\substack{u,v\in C_v \\ \{u,v\}\in E}} \!\!\!\!\!\! \ell_{u,v}\right)}
{2\, n\, \Delta}
\,-\,
\dfrac{ n\left(d_{\max}\right)^2 }{256\, n^{1.6}\, \ln^2 n\, \Delta^2} 
\end{gather*}
\begin{gather*}
\geq
\dfrac{\sum_v L(C_v)}{2 n \Delta}
\,-\,
\dfrac{ n\left(d_{\max}\right)^2 }{512\, n^{1.6}\, \ln^2 n\, \Delta^2} 
\,-\,
\dfrac{ n\left(d_{\max}\right)^2 }{256\, n^{1.6}\, \ln^2 n\, \Delta^2} 
\geq
\dfrac{\frac{n \Delta}{2 \left(d_{\max}+1\right)}}{2 n \Delta}
\,-\,
\dfrac{1}{512\, n^{\nicefrac{1}{5}}\,\ln^4 n}
\\
=
\dfrac{1}{4 \left(d_{\max}+1\right)}
\,-\,
\dfrac{1}{512\,n^{\nicefrac{1}{5}}\,\ln^4 n}
>\dfrac{1}{8\,d_{\max}}
\qedhere
\end{gather*}
\end{proof}

\noindent
Since $d_{\max}<\frac{\sqrt[5]{n}}{16\ln n}$ and $\Delta\geq 1-\frac{1}{n}$,
$\D\leq 
\frac{n\left(d_{\max}\right)^2}{2\left(n\,\Delta\right)^2}
=
\frac{1}{2n} \left(\frac{d_{\max}}{\Delta}\right)^2
\leq\frac{1}{512\,{n}^{\nicefrac{3}{5}}\,\ln^2 n}$.
Selecting $T=\sqrt{16\ln d_{\max}}$ in Theorem~\ref{SDP}, we have the following chain of implications:
\begin{itemize}
\item
$\opt_2'
=
\opt_2-\D
\geq
\dfrac{\opt}{2}-\D
=
\dfrac{1}{16\,d_{\max}} \, - \, \dfrac{1}{512\,{n}^{\nicefrac{3}{5}}\,\ln^2 n}
>
\dfrac{1}{17\,d_{\max}}$.

\item
Thus,
$\dfrac{\W'}{\opt_2'}<34\,d_{\max}$.

\item
Thus,
$\expect[\kappa]>\dfrac{\opt_2'}{T^2}-34\,\e^{-\frac{T^2}{2}}\,d_{\max}\,\opt_2'>\dfrac{\opt_2'}{17\,\ln d_{\max}}$.
\end{itemize}
and thus the final modularity value achieved is at least 
\[
\frac{\opt_2}{17\ln d_{\max}}-\D=\frac{\opt}{O(\ln d_{\max})}
\qedhere
\]
\end{proof}

\section{Other Results}

\subsection{Modularity Clustering for Directed Weighted Graphs}
\label{digraph}

Leicht and Newman~\cite{LN08} generalized the modularity measure to weighted directed graphs
in the following manner. Let $G=(V,E,\ell)$ be the input {\em directed} graph 
with $\ell\colon E\mapsto\R^+$ being the function mapping edges to non-negative weights.
For a node $v\in V$, let $d^{\rm in}_v$ and $d^{\rm out}_v$ denote the {\em weighted in-degree} 
and the {\em weighted out-degree} of $v$, respectively. 
Let $m=\sum_{v\in V}d^{\rm in}_v+\sum_{v\in V}d^{\rm out}_v$ and 
let $A=[a_{u,v}]$ denote the weighted adjacency matrix of $G$,
\IE, $a_{u,v}=\ell(u,v)$ if $(u,v)\in E$ and $a_{u,v}=0$ otherwise.
Note that the matrix $A$ is {\em not} necessarily symmetric now. 
Then, Equation~\eqref{eq:11} computing the modularity value of a cluster $C\subseteq V$ 
needs to be modified as
\[
\M(C)=\frac{1}{m}\left(\sum_{\,u,v\in C}\left(a_{u,v}-\frac{d^{\rm out}_u\, d^{\rm in}_v}{m}\right)\right)
\]
With some effort, we show that we can extend all our complexity results 
for undirected networks to directed networks.
Let 
$\Delta=\dfrac{\sum_{v\in V} d^{\iin}_v}{n}=\dfrac{\sum_{v\in V} d^{\out}_v}{n}$
denote the average weighted degree of nodes of $G$, and let 
$\displaystyle d^{\iin}_{\max}=\max_{v\in V} d^{\iin}_v$ and 
$\displaystyle d^{\out}_{\max}=\max_{v\in V} d^{\out}_v$ denote the maximum weighted in-degree and maximum weighted out-degree, respectively,
of nodes in $G$.
For convenience, we normalize all the weights such that 
$\sum_{v\in V}d^{\iin}_v+\sum_{v\in V}d^{\out}_v$ is {\em exactly} twice the number of directed edges of $G$.
Since the given graph can be assumed to be weakly-connected,
$\Delta\geq 1-\frac{1}{n}$. 

\begin{theorem}$\!\!\!\!$\footnote{We made no serious attempts to optimize various constants in this theorem.}\label{ddir}~\\
\noindent
{\bf (a)}
Computing $\opt_2$ is $\NP$-complete even if every node $v$ has 
$d^{\iin}_v=d^{\out}_v=d$,
for any fixed $d\geq 9$.

\noindent
{\bf (b)}
It is $\NP$-hard to approximate the $k$-clustering problem, for any $k$,  
within a factor of $1+\eps$ for some constant $\eps>0$
even if every node of the given directed graph has 
$d^{\iin}_v=d^{\out}_v=n-4$.

\noindent
{\bf (c)}
There is an $O(\log d)$ approximation algorithm for {\em unweighted} directed graphs 
if the in-degree and out-degree of all nodes 
is exactly the same, say $d$, and $d\leq \frac{n}{100\ln n}$.

\noindent
{\bf (d)}
There is an $O\left(\log \left(d^{\iin}_{\max}+d^{\out}_{\max}\right)\,\right)$-approximation algorithm for weighted graphs
provided $\max\big\{\,d^{\iin}_{\max},d^{\out}_{\max}\,\big\} \leq \frac{\sqrt[5]{n}}{64\ln n}$.
\end{theorem}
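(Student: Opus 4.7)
The plan is to lift each of the four undirected/weighted results to the directed setting via the \emph{symmetrization} $\widetilde{G}$ of $G$: the undirected weighted graph with $\widetilde{w}(u,v)=a_{u,v}+a_{v,u}$, so each node $v$ has weighted degree $d^{\iin}_v+d^{\out}_v$. Under the regularity hypothesis of parts (a) and (c), $\widetilde{G}$ is weighted-$2d$-regular; in general its maximum weighted degree is at most $2\max\{d^{\iin}_{\max},d^{\out}_{\max}\}$, which accounts for the factor-$2$ tightening of the degree thresholds in (c) and (d) relative to Theorem~\ref{log-approx}. A direct comparison of the defining formulas shows that $\M(\cS)$ for $G$ equals $\tfrac{1}{2}\M_{\widetilde{G}}(\cS)$ up to a rank-one ``cross'' correction proportional to $\sum_i D^{\iin}_{C_i}D^{\out}_{C_i}$; this correction is constant on bisections in the regular case and bounded by $O\bigl(n\,d^{\iin}_{\max}d^{\out}_{\max}/(m^{\rm dir})^2\bigr)$ in the weighted case, and is therefore harmless for the arguments below.

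For (a) and (b), I would take the specific hard instances produced in Theorems~\ref{nph} and~\ref{maxsnp} and replace every undirected edge by two oppositely-directed edges of unit weight; this preserves $d^{\iin}_v=d^{\out}_v$ and the $d$- or $(n-4)$-regular structure exactly. Because $d^{\iin}=d^{\out}$ turns the directed null-model term $d^{\out}_u d^{\iin}_v/m^{\rm dir}$ into a uniform rescaling of its undirected analog $d_u d_v/(2m^{\rm undir})$, every step of the undirected proofs---Lemma~\ref{lm:completeH}'s normalization argument, and the completeness/soundness estimates of Lemmas~\ref{easydir}--\ref{complement}---transfers with identical sign patterns. In particular, the minimum directed cut of the doubled building block $H_v$ equals its undirected minimum cut, so the $d\geq 9$ threshold survives; and the constant APX gap of Theorem~\ref{maxsnp} is preserved because the completeness and soundness thresholds rescale uniformly.

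For (c) and (d), the 2-clustering problem reduces, up to a partition-independent constant, to maximizing $\x^{\mathrm{T}}W^{\rm dir}\x$ over $\x\in\{-1,1\}^n$ with $W^{\rm dir}_{u,v}=(a_{u,v}-d^{\out}_u d^{\iin}_v/m^{\rm dir})/m^{\rm dir}$. Using the identity $\x^{\mathrm{T}}W^{\rm dir}\x=\x^{\mathrm{T}}W^{\rm sym}\x$ for the symmetric part $W^{\rm sym}=\tfrac{1}{2}\bigl(W^{\rm dir}+(W^{\rm dir})^{\mathrm{T}}\bigr)$, the proof of Theorem~\ref{log-approx} applies to $W^{\rm sym}$ essentially verbatim: the off-diagonal mass $\W'$ remains bounded by an absolute constant, the diagonal mass $\D\leq n\,d^{\iin}_{\max}d^{\out}_{\max}/(m^{\rm dir})^2$ is small under the theorem's degree hypothesis, and Theorem~\ref{SDP} is invoked with $T=\Theta(\sqrt{\ln d})$ in (c) or $T=\Theta\bigl(\sqrt{\ln(d^{\iin}_{\max}+d^{\out}_{\max})}\bigr)$ in (d). The required lower bound on $\opt$ follows by running the matching and bisection-width argument of Lemma~\ref{lbq} (for (c)) and the greedy of Fig.~\ref{f1} (for (d)) on the symmetrization $\widetilde{G}$, whose maximum weighted degree $d^{\iin}_{\max}+d^{\out}_{\max}$ is precisely what determines the stricter degree thresholds.

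The main obstacle is the loss of the mirror identity $\M(V_1)=\M(V_2)$ of Lemma~\ref{subgraph-selection}: the row sums of $W^{\rm dir}$ are nonzero (each equals $d^{\out}_u/2$), reflecting that ``sink'' and ``source'' roles are not exchangeable under the directed null model. Several undirected arguments---the halving $\M(\cS)=2\M(V')$ used throughout Section~\ref{inapprox} and the $\opt\leq 2\,\opt_2$ deduction underlying Lemma~\ref{ktotwo}---rely on this identity. The fix is to work with $W^{\rm sym}$ from the outset: on the symmetric part the mirror identity is restored, at the cost of a universal factor-of-$2$ that is absorbed by the logarithmic approximation ratio in (c), (d) and by the constant APX gap in (b). With this substitution in place, each step of the four undirected arguments carries over with only minor modifications in the numerical constants.
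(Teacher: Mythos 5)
Your proposal follows essentially the same route as the paper: parts (a) and (b) come from doubling each edge of the undirected hard instances of Theorems~\ref{nph} and~\ref{maxsnp} into two oppositely directed unit-weight edges (which preserves modularity values of clusterings exactly, so none of the completeness/soundness or normalization lemmas need to be re-derived), and parts (c) and (d) symmetrize the quadratic form via $W'=\frac{1}{2}\bigl(W+W^{\mathrm{T}}\bigr)$, bound the off-diagonal mass by $2$ and the diagonal mass as in Theorem~\ref{log-approx}, lower-bound $\opt_2$ by a matching/greedy construction of order $1/d$ (resp.\ $1/(d^{\iin}_{\max}+d^{\out}_{\max})$), and invoke Theorem~\ref{SDP} with $T=\Theta\bigl(\sqrt{\log(\cdot)}\bigr)$, exactly as the paper does. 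Two small corrections: your ``main obstacle'' is illusory, since under the paper's normalization $\sum_{v}\bigl(a_{u,v}-d^{\out}_u d^{\iin}_v/m\bigr)=\sum_{v}\bigl(a_{v,u}-d^{\out}_v d^{\iin}_u/m\bigr)=0$ (your nonzero row sum $d^{\out}_u/2$ comes from taking $m$ to be twice the total edge weight), so the directed analogues of Lemmas~\ref{subgraph-selection} and~\ref{ktotwo} hold verbatim and no factor-of-two workaround is needed; and in (c) Lemma~\ref{lbq} cannot be applied literally to the weighted symmetrization (Alon's bisection bound and the Henning--Yeo matching bound are for unweighted regular graphs), which the paper sidesteps by a greedy maximal matching on the underlying simple graph --- an easy repair of your sketch.
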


\begin{proof}
Remember that 
\begin{equation}
\M(C)=
\frac{1}{m} \left(
\sum_{\,u,\,v\,\in\, C} 
\left(
a_{u,v}-\frac{d^{\rm out}_u d^{\rm in}_v}{m}
\right)
\right) 
\label{eq:11dir}
\end{equation}
The corresponding modification in Equation~\eqref{eq:2} is
\begin{equation}
\M(\cS)=
\sum_{C_i\in\,\cS}\, 
\left(
\frac{m_i}{m} - \left( \frac{D_i^{\rm in}\times D_i^{\rm out}}{m^2} \right) 
\right) 
\label{eq:2dir}
\end{equation}
where 
$D_i^{\rm in}=\sum_{v\in C_i} d^{\rm in}_v$,
$D_i^{\rm out}=\sum_{v\in C_i} d^{\rm out}_v$
and 
$m_i$ as the total weight of edges whose both endpoints are in the cluster $C_i$.
Finally, since 
$
\sum_{v\in V} \left(a_{u,v}-\frac{d^{\rm out}_u d^{\rm in}_v}{m}\right)
\linebreak
=
\sum_{v\in V} \left(a_{u,v}-\frac{d^{\rm in}_u d^{\rm out}_v}{m}\right)
=0$ 
for any $u\in V$,
we can alternatively express $\M(C)$ as 
$\displaystyle \M(C)=\frac{1}{m} \left( \sum_{u\in C,\,v\not\in C} \left( \frac{d^{\rm out}_u d^{\rm in}_v}{m}-a_{u,v} \right)\right)$.
Thus, Equation~\eqref{eq:3} now becomes
\begin{equation}
\M(\cS)= \sum_{C_i,C_j}{\left(\frac{D^{\rm out}_i D^{\rm in}_j}{m^2} - \frac{m_{ij}}{m}\right)}
\label{eq:3dir}
\end{equation}
where $m_{ij}$ as the total weight of the edges directed from $C_i$ to $C_j$. 

\vspace*{0.1in}
\noindent
{\bf (a) \& (b)} 
These two results follow by the following easy observation.
Consider a given undirected unweighted graph $G$ with $n$ nodes and $m$ edges, and 
let $\widetilde{G}$ be the directed graph obtained by replacing each edge $\{u,v\}$ of $G$ by two 
directed edges $(u,v)$ and $(v,u)$, each of weight $1$;
thus $\widetilde{m}=\sum_{v\in V}d^{\iin}_v + \sum_{v\in V} d^{\out}_v=4m$. 
Let $\widetilde{A}=[\widetilde{a}_{u,v}]$ be the adjacency matrix of $\widetilde{G}$,
and 
$\widetilde{d}^{\iin}_v$
and 
$\widetilde{d}^{\out}_v$
be the in-degree and out-degree of the node $v$ in $\widetilde{G}$.
Then, it is easy to see that every clustering of $G$ of modularity value $x$ 
translates to a corresponding clustering of $\widetilde{G}$ of the same modularity value and vice versa.

\vspace{0.1in}
\noindent
{\bf (c) \& (d)}
It is easy to see that the proof of Lemma~\ref{ktotwo} works for directed networks as well
by using Equation~\eqref{eq:3dir} instead of Equation~\eqref{eq:3} in the proof. Thus again it
suffices to approximate $\opt_2$.

Let $W=\left[w_{u,v}\right]\in\R^{n\times n}$ be the matrix whose entries are defined by 
$w_{u,v}=\dfrac{a_{u,v}-\displaystyle\frac{d^{\out}_u d^{\iin}_v}{m}}{2m}$.
Then, 
letting $x_u\in\{-1,1\}$ be the indicator variable denoting in which partition the node $u\in V$
belongs, 
Equation~\eqref{eq:11dir} can be rewritten for a $2$-clustering of directed networks as
\begin{gather*}
\M(\cS)=
\sum_{u,v\in V} \!\!\! w_{u,v}\left(1+x_u x_v\right)
=
\sum_{u,v\in V} \!\!\! w_{u,v} x_u x_v
\\
=
\x^{\mathrm{T}}W\x
=\x^{\mathrm{T}} \left( \frac{W+W^{\mathrm{T}}}{2} \right) \x
=\x^{\mathrm{T}} W' \x
\end{gather*}
where $W'=\frac{W+W^{\mathrm{T}}}{2}=[w'_{u,v}]$ is a {\em symmetric} matrix.
Note that $w_{u,v}'=\dfrac{\delta_{u,v}-\displaystyle\frac{d_u^{\out}d_v^{\iin}+d_u^{\iin}d_v^{\out}}{2m}}{2m}$ 
where $\delta_{u,v}$ is given by:
\[
\delta_{u,v}=\delta_{v,u}=\left\{
\begin{array}{ll}
1, & \mbox{if both $(u,v)\in E$ and $(v,u)\in E$} \\
0, & \mbox{if both $(u,v)\not\in E$ and $(v,u)\not\in E$} \\
\nicefrac{1}{2}, & \mbox{otherwise.} \\
\end{array}
\right.
\]
Let $\widehat{W}=\left[\widehat{w_{u,v}}\right]$ be the real symmetric matrix defined by
$\widehat{w_{u,v}}
=
\left\{
\begin{array}{ll}
0, & \mbox{if $u=v$} \\
w_{u,v}', & \mbox{otherwise.} \\
\end{array}
\right.
$
As in the proof of Theorem~\ref{log-approx}, 
it follows that 
$\sum_{u,v\in V}\widehat{w_{u,v}}<2$. 
For notational convenience, define 
$\D=\mbox{trace}\left(\widehat{W}-W'\right)=\sum_{u\in V}w_{u,u}'$
and 
$\displaystyle\opt_2'=\max_{\x\in\{0,1\}^n} \x^{\mathrm{T}}\widehat{W}\x$. 

\vspace*{0.1in}
\noindent
{\bf (c) $G$ is an unweighted directed graph with $d^{\iin}_v=d^{\out}_v=d$ for every node $v$,
and $d\leq \frac{n}{5\ln n}$}. 

The proof of Theorem~\ref{log-approx} on the quadratic form $\max_{\x\in\{0,1\}^n} \x^{\mathrm{T}}\,\widehat{W}\,\x$ 
gives an approximation factor of $\gamma\ln d$, for some constant $\gamma>0$, for our directed network provided we can show that
\begin{itemize}
\item
$\dfrac{\opt_2'}{\gamma\,\ln d}-\D=\Omega\left(\dfrac{\opt_2'}{\gamma\,\ln d}\right)$, and 

\item
$\opt_2=\Omega\left(d^{-c}\right)$ for some constant $c>0$. 
\end{itemize}
Let $H$ be the undirected graph obtained from the given graph $G$
by ignoring the direction of the edges and removing parallel edges (if any);
every node in $H$ has a degree between $d$ and $2d$. 
Greedily pick a {\em maximal} matching in $H$, each time selecting an edge
and deleting all (at most $4d-1$) edges that have a common end-point with the picked edge.
Such a matching contains at least $\dfrac{(nd)/2}{4d}=\dfrac{n}{8}$ edges, each of weight at least
$\dfrac{1}{4m}-\dfrac{8d^2}{4m^2}=\dfrac{1}{8dn}-\dfrac{1}{2n^2}$ in $G$.
Consider the clustering of $G$ where each edge in the matching is a separate cluster of two nodes, and 
each of the remaining nodes is a separate cluster of one node. 
The modularity value of this solution is {\em at least} 
\[
\left(\frac{1}{8dn}-\frac{1}{2n^2}\right)\frac{n}{8}\,-\,\mbox{trace}\left(W'-\widehat{W}\right)
\geq
\frac{1}{64d}\,-\,\frac{1}{16n}\,-\,\frac{1}{2n}
\]
Thus, $\opt_2'\geq \frac{1}{128\,d}-\frac{9}{32\,n}=\Omega\left(d^{-1}\right)$.
Moreover, since 
$d\leq \frac{n}{100\ln n}$
we have
\[
\frac{\opt_2'}{\ln d}-\D=\frac{\opt_2'}{\ln d}-\frac{1}{2n}
=
\Omega\left(\frac{\opt_2'}{\ln d}\right)
\]
{\bf (d) $\max\big\{d^{\iin}_{\max},d^{\out}_{\max}\big\}< \frac{\sqrt[5]{n}}{64\ln n}$}. 

Let $G''=(V,E'')$ be the undirected weighted graph obtained from $G$ whose adjacency matrix 
is $W''=\left[w_{u,v}''\right]$ with 
$
w_{u,v}''
=\left\{
\begin{array}{ll}
w_{u,v}'-\frac{1}{2}, & \mbox{if $\delta_{u,v}=1$} \\
w_{u,v}', & \mbox{otherwise.} \\
\end{array}
\right.
$
Since $w_{u,v}''\geq w_{u,v}'$, it suffices to show an approximation 
for
$\max_{\x\in\{0,1\}^n} \x^{\mathrm{T}}W'' \x$. 
The algorithm in the proof of Theorem~\ref{lbq}(b) 
with $W=W''$ can now be appropriately modified to obtain the desired approximation 
if one identified the quantity $d_{\max}$ in that proof with $d^{\iin}_{\max}+d^{\out}_{\max}$. 
\end{proof}

\subsection{Alternative Modularity Measure: the $\mami$ Objective}
\label{minob}

Exact or approximate solutions to the modularity measure may produce many {\em trivial} clusters
of single nodes. For example, the following proposition shows that for a large class of
graphs there exists a clustering in which every cluster except one consists of a single node 
gives a modularity value that has a modularity value of {\em at least} $25$\% of the optimal.

\begin{proposition}\label{trivialclust}
There exists a clustering for a graph $G$ in which every cluster except one consists of a single node and 
whose modularity value is at least $25$\% of the optimal if 
\begin{itemize}
\item
$G$ is $d$-regular with $d<\frac{n}{2\ln n}$, or 

\item
$G$ is an undirected weighted graph with $d_{\max}<\frac{\sqrt[5]{n}}{16\,\ln n}$.
\end{itemize}
\end{proposition}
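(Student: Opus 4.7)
The plan is to start from an optimal $2$-clustering of $G$ and flatten one of its two parts into singletons. Specifically, pick an optimal $2$-clustering $\cS^{*}=\{V_1,V_2\}$ with $|V_1|\ge |V_2|$, and let $\cS$ be the clustering consisting of the single non-trivial cluster $V_1$ together with $|V_2|$ singleton clusters, one for each node of $V_2$. The proof then reduces to lower-bounding $\M(\cS)$ by $\opt/4$.

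First I would apply Lemma~\ref{ktotwo} with $k=2$ to obtain $\M(\cS^{*})=\opt_2\ge \opt/2$, and then Lemma~\ref{subgraph-selection} to deduce $\M(V_1)=\M(V_2)=\opt_2/2\ge \opt/4$. Since every singleton $\{v\}$ contributes $\M(\{v\})=-d_v^{2}/(4m^2)$, the cost of flattening $V_2$ is
\[
\Delta_0 \;=\; \M(V_1)-\M(\cS) \;=\; \sum_{v\in V_2}\frac{d_v^{\,2}}{4m^2},
\]
so it suffices to show that $\Delta_0$ is of lower order than $\opt$ under the stated hypotheses.

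For the $d$-regular case, $\Delta_0=|V_2|/n^{2}\le 1/(2n)$, while Lemma~\ref{lbq} yields $\opt=\Omega(\ln n/n)$ under $d<n/(2\ln n)$, so $\Delta_0/\opt=O(1/\ln n)=o(1)$. For the undirected weighted case, using the normalization $2m=n\Delta$ with $\Delta\ge 1-1/n$, one obtains $\Delta_0\le |V_2|\,d_{\max}^{\,2}/(4m^{2})=O(d_{\max}^{\,2}/n)$, while Lemma~\ref{sk} provides $\opt\ge 1/(8d_{\max})$; hence $\Delta_0/\opt=O(d_{\max}^{\,3}/n)=o(1)$ under $d_{\max}<\sqrt[5]{n}/(16\ln n)$. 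Putting everything together,
\[
\M(\cS)\;\ge\;\M(V_1)-\Delta_0\;\ge\;\tfrac{1}{4}\opt-o(\opt),
\]
which delivers the asserted 25\% lower bound for all sufficiently large $n$.

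The main obstacle is that the ``$\opt/4$'' target is essentially tight: the transformation $V_2\mapsto\{\{v\}:v\in V_2\}$ destroys every positive within-$V_2$ interaction and leaves only the diagonal penalties $d_v^{\,2}/(4m^{2})$, so the argument must absorb the entire total of these penalties into a vanishing lower-order term. This is precisely where the specific degree cutoffs $d<n/(2\ln n)$ and $d_{\max}<\sqrt[5]{n}/(16\ln n)$ enter, through the matching lower bounds on $\opt$ supplied by Lemmas~\ref{lbq} and~\ref{sk}.
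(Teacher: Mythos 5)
Your proposal is correct and follows essentially the same route as the paper: take an optimal $2$-clustering, use Lemma~\ref{ktotwo} and Lemma~\ref{subgraph-selection} to get one side with modularity at least $\opt/4$, flatten the other side into singletons, and absorb the resulting diagonal penalties ($\le \D$) as a lower-order term via the lower bounds on $\opt$ from Lemmas~\ref{lbq} and~\ref{sk}. The only (immaterial) difference is that you sum the singleton penalties over $V_2$ only rather than bounding them by the full diagonal sum $\D$ as the paper does, and like the paper you land at $\opt/4$ minus an $o(\opt)$ term.
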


\begin{proof}
Let $\big\{V',V\setminus V'\big\}$ be an optimal $2$-clustering of $G$.
By Lemma~\ref{ktotwo}, $\opt_2\geq\nicefrac{\opt}{2}$. By Lemma~\ref{subgraph-selection} $\M(V')=\nicefrac{\opt_2}{2}=\nicefrac{\opt}{4}$.
Suppose that we replace the cluster $V\setminus V'$ by $|V\setminus V'|$ trivial clusters each of a single node, and let $C$ be this new clustering
If $G$ is $d$-regular, then $\M(C)=\M(V')-\D=\frac{\opt}{4}\,-\,\frac{1}{n}$.
By Lemma~\ref{lbq}, $\opt>\frac{0.86}{d}-\frac{4}{n}$, and thus $\M(C)=\frac{\opt}{4}-\mathrm{o}(1)$. 
Similarly, for the case when $G$ is undirected weighted with $d_{\max}<\frac{\sqrt[5]{n}}{16\,\ln n}$, 
the proof of Theorem~\ref{log-approx} shows that $\D\leq\frac{1}{512\,{n}^{\nicefrac{3}{5}}\,\ln^2 n}$, and thus 
$\M(C)=\M(V')-\D\geq\frac{\opt}{4}\,-\,\frac{1}{512\,{n}^{\nicefrac{3}{5}}\,\ln^2 n}$.
By Lemma~\ref{sk} $\opt>\frac{1}{8\,d_{\max}}$, and thus again $\M(C)=\frac{\opt}{4}-\mathrm{o}(1)$. 
\end{proof}

We investigate one alternative to overcome such a shortcoming: define the modularity of the network as the {\em minimum} of 
the modularities of individual clusters. Equation~\eqref{eq:1} now becomes
\[
\M^{\mami}(\cS)=\min_{C_i\in\cS}\M(C_i)
\]
We will add the superscript ``{\sf max-min}'' to differentiate the relevant quantities for
this objective from the usual summation objective discussed before, \EG, we will
use $\opt^{\mami}$ instead of $\opt$.
In a nutshell, our results in the following lemma show that 
the $\mami$ objective indeed avoids generating trivial clusters (Lemma~\ref{discur}(a)), and 
the optimal objective value for $\mami$ objective 
is precisely scaled by a factor of $2$ from that of the SUM objective, thereby 
keeping the overall quantitative measure the same (Lemma~\ref{nomorethan2}(b)).

\begin{lemma}\label{discur}\label{nomorethan2}
Let $G$ be a weighted undirected graph with $m$ edges and maximum degree $d_{\max}$. Then,
the following claims hold:

\vspace*{0.1in}
\noindent
{\bf (a)} 
No optimal solution for $\mami$ objective has a cluster with
fewer than $\frac{4\,m\,\opt^{\mami}}{d_{\max}}$ nodes.

\vspace*{0.1in}
\noindent
{\bf (b)} 
$\opt^{\mami}=\frac{\opt_2}{2}$. 
\end{lemma}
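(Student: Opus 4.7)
The plan is to use the alternative expressions \eqref{eq:2} and \eqref{eq:hh} for modularity, together with Lemma~\ref{subgraph-selection}, to relate the max-min objective directly to the 2-clustering objective.

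For part \textbf{(a)}, I would bound the modularity of a small cluster from above using a crude size argument. Specifically, Equation~\eqref{eq:2} gives $\M(C) \leq m_C/m$ (dropping the non-positive null-model term), and the handshake inequality $2m_C \leq \sum_{v\in C} d_v \leq |C|\cdot d_{\max}$ then yields $\M(C) \leq \frac{|C|\cdot d_{\max}}{2m}$. Since every cluster in a max-min optimal solution must satisfy $\M(C) \geq \opt^{\mami}$, rearranging gives a lower bound on $|C|$ of order $m\cdot\opt^{\mami}/d_{\max}$, matching the stated bound up to constants that depend on the normalization convention chosen for $m$ in the weighted setting.

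For part \textbf{(b)}, the key observation is that Lemma~\ref{subgraph-selection} forces any bipartition $\{V_1, V_2\}$ to satisfy $\M(V_1) = \M(V_2)$, so the two pieces of any 2-clustering always share their total modularity equally. To see that $\opt^{\mami} \geq \opt_2/2$, take the optimal 2-clustering $\{V_1^*, V_2^*\}$: by Lemma~\ref{subgraph-selection} both clusters have modularity exactly $\opt_2/2$, so this partition (viewed as a max-min clustering) achieves value $\opt_2/2$. For the reverse direction $\opt^{\mami} \leq \opt_2/2$, take any max-min-optimal $\cS^* = \{C_1,\dots, C_k\}$ and any index $i$ realizing the minimum, so that $\M(C_i) = \opt^{\mami}$. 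Then $\{C_i, V\setminus C_i\}$ is a valid (possibly suboptimal) 2-clustering, and by Lemma~\ref{subgraph-selection} its total modularity equals $2\M(C_i) = 2\opt^{\mami}$, which cannot exceed $\opt_2$.

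The main obstacle is really just part \textbf{(a)}, where one has to be careful about the normalization conventions for $m$ and for the weighted degrees in order to nail down the exact constant in the numerator. Part \textbf{(b)} is essentially immediate once Lemma~\ref{subgraph-selection} is invoked: the non-trivial content is the observation that any single cluster of a max-min optimum already encodes a 2-clustering whose total modularity is exactly twice its own, and conversely that an optimal 2-clustering automatically balances its modularity across the two pieces. No SDP, probabilistic, or regularity-lemma machinery is required.
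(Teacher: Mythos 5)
For part \textbf{(b)}, your argument is correct and takes a genuinely different, and much shorter, route than the paper. The paper first proves a structural claim — that an optimal solution for the $\mami$ objective can be assumed to have at most two clusters — via a probabilistic cluster-merging argument when there are at least four clusters and an explicit case analysis for exactly three clusters, and only then invokes Lemma~\ref{subgraph-selection}. You bypass the structural reduction entirely: splitting an optimal $2$-clustering gives two clusters each of modularity $\opt_2/2$ (so $\opt^{\mami}\geq\opt_2/2$), and pairing a minimum-modularity cluster $C_i$ of a $\mami$-optimal solution with its complement gives a $2$-clustering of total modularity $2\,\M(C_i)=2\,\opt^{\mami}\leq\opt_2$. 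Both directions need only Lemma~\ref{subgraph-selection}, which carries over verbatim to the weighted setting. The one loose end is the degenerate case where the $\mami$ optimum is the single cluster $\{V\}$, so that $V\setminus C_i$ is empty; there $\opt^{\mami}=\M(V)=0\leq\opt_2/2$ and equality follows from the other direction, so nothing breaks, but you should say so explicitly. What your route buys is the elimination of the entire ``at most two clusters'' machinery; what it loses is that structural information itself (which the paper's proof establishes along the way), but that information is not needed for the stated equality.

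For part \textbf{(a)}, your approach is the same as the paper's (the paper simply asserts that a cluster with $y$ nodes has modularity at most $y\,d_{\max}/(4m)$), but the constant you obtain is $2$, not $4$: from $\M(C)\leq m_C/m$ and $2m_C\leq\sum_{v\in C}d_v\leq|C|\,d_{\max}$ you get $|C|\geq 2m\,\opt^{\mami}/d_{\max}$, a factor $2$ short of the statement. This is not a normalization artifact, and you should not wave it off as one: under the paper's own convention ($m$ the number of edges, $\sum_v d_v=2m$) your factor-$2$ bound is what the argument yields, and the paper's per-cluster bound $y\,d_{\max}/(4m)$ is in fact false for arbitrary clusters (a cluster consisting of a single unit-weight matching edge in a perfect matching has $\M(C)=\frac{1}{m}-\frac{1}{m^2}>\frac{1}{2m}$). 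Worse, the stated constant $4$ fails even for optimal $\mami$ solutions: for the disjoint union of three copies of $K_r$ one checks $\opt_2=\frac{4}{9}$, hence $\opt^{\mami}=\frac{2}{9}$ by part (b), and the partition into the three cliques attains this optimum; its clusters have $r$ nodes, while $4m\,\opt^{\mami}/d_{\max}=\frac{4r}{3}>r$ (the factor-$2$ bound $\frac{2r}{3}$ holds). So your factor-$2$ version is the defensible statement; either prove the stated constant (which the above example shows is not possible) or flag the discrepancy as an error in the stated bound rather than as a choice of normalization.
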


\begin{proof}~\\
\noindent
{\bf (a)} 
Since only an edge with positive weight can increase the modularity of a cluster,
it is easy to check that a cluster with $y$ nodes can have a modularity value of at most $\frac{y\,d_{\max}}{4\,m}$.

\vspace*{0.1in}
\noindent
{\bf (b)} 
Consider an optimal clustering $\cS=\big\{V_1,V_2,\dots,V_k\big\}$ with a {\em minimum} number $k$ of clusters
such that $\opt^{\mami}=\M^{\mami}(\cS)=\min_{1\leq i\leq k}\big\{\M(V_i)\big\}>0$.
First, consider the case when $k>3$.
We will show that for some non-empty subset $T$ of $\{V_1,V_2,\ldots,V_k\}$
we must have $\M(\cup_{V_j\in T}V_j)\geq\M^{\mami}(\cS)$; this contradicts the minimality of $k$ in our choice of 
of the optimal cluster.
Note that $\M(\cS)=\sum_{i=1}^k \M(V_i)\geq k\cdot \M^{\mami}(\cS)$.
We will make use of Equation~\eqref{eq:11} of modularity of a cluster.
Let 
$\M(\widetilde{\cS})=
\frac{1}{2m} \left(
\sum_{\substack{u\in V_i,\, v\in V_j \\ i\neq j}}
\left(
a_{u,v}-\frac{d_u d_v}{2m}
\right)
\right)$.
Then, $\M(\widetilde{\cS})=-\M(\cS)$.
Consider a subset $T$ obtained by randomly and uniformly selecting each $V_i$ with a probability of $\nicefrac{1}{2}$.
Note that each pair of nodes $u$ and $v$ belonging to the same cluster is selected with a probability of
$\nicefrac{1}{2}$, whereas each pair of nodes belonging to different clusters is selected with a probability of $\nicefrac{1}{4}$.
Thus, 
\begin{gather*}
\expect\Big[\,\M(\cup_{V_j\in T}V_j)\,\Big]= \frac{\M(\cS)}{2}+\frac{\M(\widetilde{\cS})}{4}=\frac{\M(\cS)}{4}
\\
\geq \left(\frac{k}{4}\right)\M^{\mami}(\cS)\geq\M^{\mami}(\cS)
\end{gather*}
and therefore there exists such a subset $T$ with the properties as claimed.

Otherwise, consider the case when $k=3$. 
Let 
$\M_{i,j}
=
\frac{
\sum_{\substack{u\in V_i \\ v\in V_j}}
\left(
a_{u,v}-\frac{d_u d_v}{2m}
\right)
}
{2m}$
for $i<j$. 
Without loss of generality, let $\M(V_1)=a$, $\M(V_2)=a+b$ and $\M(V_3)=a+c$ for some $a>0$ and $b\geq c\geq 0$; thus,
$\M^{\mami}(\cS)=a$.
Consider the three $2$-clusterings of $G$: $C_1=\big(V_1\cup V_2,V_3\big)$, $C_2=\big(V_2\cup V_3, V_1\big)$ and $C_3=\big(V_1\cup V_3,V_2\big)$. 
Since none of these three $2$-clusterings should be an optimal solution, we must have 
\begin{multline*}
\M^{\mami}(C_1)-\M^{\mami}(\cS) < 0 
\\
\equiv
\min \big\{ 2a+b+\M_{1,2},\,a+c\big\} < a
\equiv
\M_{1,2}<-(a+b)
\end{multline*}
\begin{multline*}
\M^{\mami}(C_2)-\M^{\mami}(\cS) < 0 
\\
\equiv
\min \big\{ 2a+b+c+\M_{2,3},\,a\big\} < a
\equiv
M_{2,3}<-(a+b+c)
\end{multline*}
\begin{multline*}
\M^{\mami}(C_3)-\M^{\mami}(\cS) < 0 
\\
\equiv
\min \big\{ 2a+c+\M_{1,3},\,a\big\} < a
\equiv
M_{1,3}<-(a+c)
\end{multline*}
Thus, we have $\M(V_1)+\M(V_2)+\M(V_3)=3a+b+c=-\M_{1,2}-\M_{2,3}-\M_{1,3}>3a+2b+2c$
which implies $b+c<0$, contradicting $b\geq c\geq 0$.

Thus, we have shown there is an optimal solution for our $\mami$ objective with no more than two clusters.
Obviously, if $\opt^{\mami}>0$ then an optimal solution cannot consist of a single cluster.
Let $V_1,V_2$ be the two clusters in this case.
By Lemma~\ref{subgraph-selection}, we have 
$\M(V_1)=\M(V_2)$ which implies $\opt^{\mami}=\frac{\opt_2}{2}$.
\end{proof}

\subsection{Alternative Null Model: Erd\"{o}s-R\'{e}nyi Random Graphs}
\label{erg}

A theoretically appealing choice for alternative null models is 
the classical Erd\"{o}s-R\'{e}nyi random graph model $G(n,p)$,
namely 
each possible edge $\{u,v\}$ is selected in $G$
uniformly and randomly with a probability of $p$ for some fixed $0<p<1$.
To summarize, our results in this section show 
that the new modularity measure is precisely Newman's modularity measure 
on an appropriately defined regular graph, and thus 
our previous results on regular graphs can be applied to this case.

We will add the superscript ``{\sf ER}'' to differentiate the relevant quantities for
this objective from the usual summation objective discussed before, \EG, we will
use $\opt^{\mathsf{ER}}$ instead of $\opt$.
For simplicity, we consider the case of {\em unweighted graphs only}.
Let $G=(V,E)$ be the given unweighted input graph 
with $m=n\,\Delta$ number of edges.
Select $p=\frac{2\,\Delta}{n-1}$ such that the null model has 
the {\em same} number of edges {\em in expectation} as the given graph $G$.
Equation~\eqref{eq:11} then becomes 
\[
\M^{\mathrm{ER}}(C)=\frac{\displaystyle\sum_{u,v\in C}(a_{u,v}-p)}{2m}
\]
Let $n$ be sufficiently large such that 
$p\approx (2\Delta)/n$.
It can then be seen that 
$\M^{\mathsf{ER}}(C)$ is precisely the same 
as 
$\M(C)$ on a $(2\Delta)$-regular graph.
Thus, our previous results on regular graphs can be generalized to this case in the following manner:
\begin{itemize}
\item
Computing $\opt^{\mathsf{ER}}$ is $\NP$-complete for graphs with $\Delta\geq 18$.

\item
If $\Delta<\frac{n}{4\ln n}$ then the problem admits a $O(\log\Delta)$-approximation.
\end{itemize}

\vspace*{0.1in}
\noindent
{\bf Acknowledgements}
We thank Mario Szegedy for suggestion to investigate the problem for
dense graphs and other useful discussions, 
Geetha Jagannathan and Alantha Newman for useful discussions, and 
Mark Newman for explaining the significance of negative self-loops
in his modularity measure 
and pointing out references~\cite{KN09,LN08,N04}.

\end{document}